\let\olddesc\description
\def\description{\olddesc\setlist[itemize]{leftmargin=*,labelindent=-12pt}}
\newtcolorbox[auto counter]{examplebox}[2][]{%
title=Box~\thetcbcounter: #2,#1}
\newtheorem{theorem}{Theorem}
\newtheorem{corollary}[theorem]{Corollary}
\newtheorem{prop}[theorem]{Proposition}
\newtheorem{definition}[theorem]{Definition}
\newcommand{\PN}{\textit{SciNet}}
\newcommand{\lab} {a_{\textnormal{corr}}}
\newcommand{\tin} {t}
\newcommand{\tout} {t'}
\newcommand{\toutPred} {\tout_{\textnormal{pred}}}
\newcommand{\thetaM} {\theta_M}
\newcommand{\thetaS} {\theta_S}
\newcommand{\phiE} {\phi_E}
\newcommand{\phiM} {\phi_M}
\newcommand{\DKL} {D_\textnormal{KL}}
\newcommand*{\Normal}{\ensuremath{\mathcal{N}}}
\DeclareMathOperator{\E}{\mathbb{E}}
\newcommand{\norm}[1]{\left\lVert#1\right\rVert}
\newcommand{\braket}[2]{\left\langle#1,#2\right\rangle}
\let\epsilon\relax
\def\epsilon{\varepsilon}
\def\id{\mathbbm 1}
\newcommand\restr[2]{{
  \left.\kern-\nulldelimiterspace 
  #1 
  \vphantom{\big|} 
  \right|_{#2} 
  }}
\begin{document}
\title{Discovering physical concepts with neural networks} 

\author{Raban~Iten} 
 \thanks{These two authors contributed equally.}
\affiliation{ETH Z\"urich, Wolfgang-Pauli-Str. 27, 8093 Z{\"u}rich, Switzerland.} 
\author{ Tony Metger} 
 \thanks{These two authors contributed equally.}
\author{ Henrik Wilming}
\author{  L\'idia del Rio}
\author{ Renato Renner} 
\affiliation{ETH Z\"urich, Wolfgang-Pauli-Str. 27, 8093 Z{\"u}rich, Switzerland.}

\date{\today}

\begin{abstract} 
Despite the success of neural networks at solving concrete physics problems, their use as a general-purpose tool for scientific discovery is still in its infancy. 
Here, we approach this problem by modelling a neural network architecture after the human physical reasoning process, which has similarities to representation learning.
This allows us to make progress towards the long-term goal of machine-assisted scientific discovery from experimental data without making prior assumptions about the system. We apply this method to toy examples and show that the network finds the physically relevant parameters, exploits conservation laws to make predictions, and can help to gain conceptual insights, e.g.~Copernicus' conclusion that the solar system is heliocentric.
\end{abstract}

\maketitle

Theoretical physics, like all fields of human activity, is influenced by the schools of thought prevalent at the time of development. 
As such, the physical theories we know may not necessarily be the simplest ones to explain experimental data, but rather the ones that most naturally followed from a previous theory at the time. 
Both general relativity and quantum theory were built upon classical mechanics --- they have been impressively successful in the restricted regimes of the very large and very small, respectively, but are fundamentally incompatible, as reflected by paradoxes such as the black hole information loss~\cite{PhysRevD.14.2460, preskill_black_1992}.
This raises an interesting question: are the laws of quantum physics, and other physical theories more generally, the most natural ones to explain data from experiments if we assume no prior knowledge of physics? While this question will likely not be answered in the near future, recent advances in artificial intelligence allow us to make a first  step in this direction. 
Here, we investigate whether neural networks can be used to discover physical concepts from experimental data.

\textbf{Previous work.} 
The goal of using machines to help with discovering the physical laws underlying experimental data has been pursued in several contexts (see Appendix~\ref{sec:comparison} for a more detailed overview and~\cite{dunjko_machine_2017,roscher_explainable_2019,alhousseini_physicists_2019,carleo_machine_2019} for recent reviews). A lot of early work focused on finding mathematical expressions describing a given dataset (see e.g.~\cite{Crutchfield1987,Schmidt2009,schmidt_automated_2011}). For example, in~\cite{Schmidt2009} an algorithm recovers the  laws of motion of simple mechanical systems, like a double pendulum, by searching over a space of mathematical expressions on given input variables. More recently, significant progress was made in extracting dynamical equations from experimental data~\cite{daniels_automated_2015,brunton_discovering_2016,lusch_deep_2017,takeishi_learning_2017,otto_linearly-recurrent_2017,raissi_deep_2018,zhang_learning_2019,raissi_hidden_2018,raissi_physics-informed_2019}. These methods are highly practical and they were successfully applied to complex physical systems, but require prior knowledge on the systems of interest, for example in the form of knowing what the relevant variables are or that dynamics should be described by differential equations. In certain situations one might not have such prior knowledge or does not want to impose it to allow the machine to find entirely different representations of the physical system.

 \begin{figure*}[t] 
    \centering
    \subfloat[]{%
   \includegraphics[width=0.4\textwidth]{./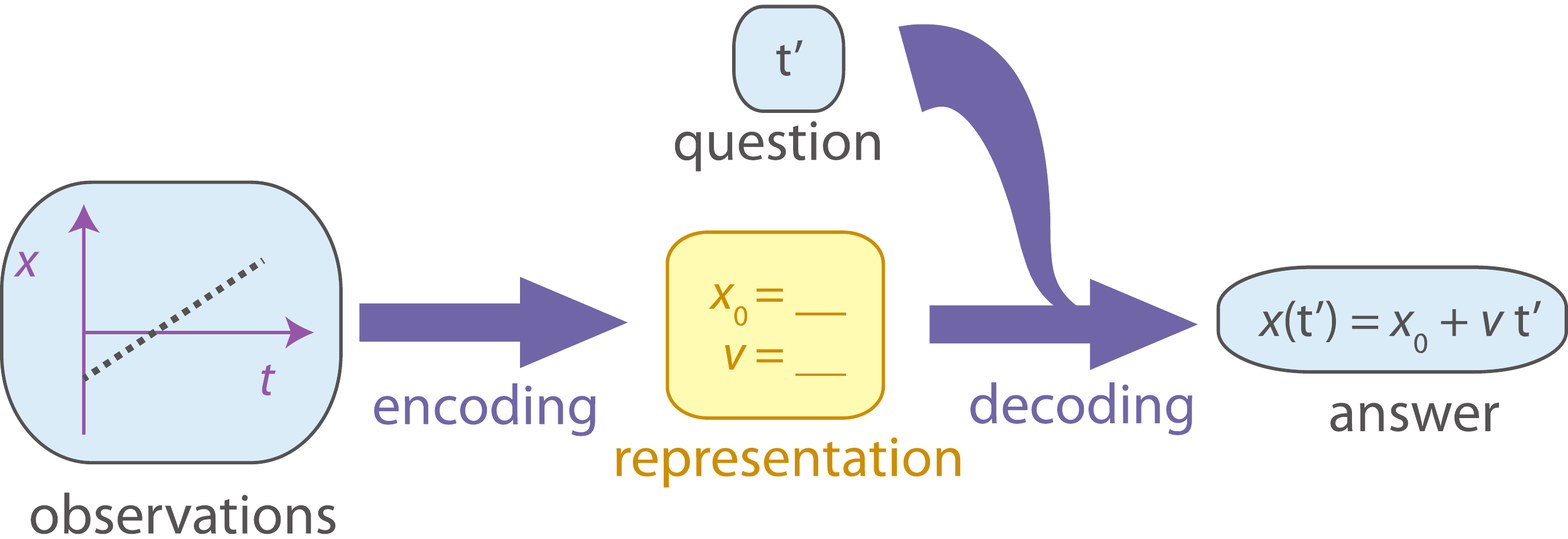} 
   }
   \quad \quad\quad \quad\quad \quad
    \subfloat[]{%
  \includegraphics[width=0.3\textwidth]{./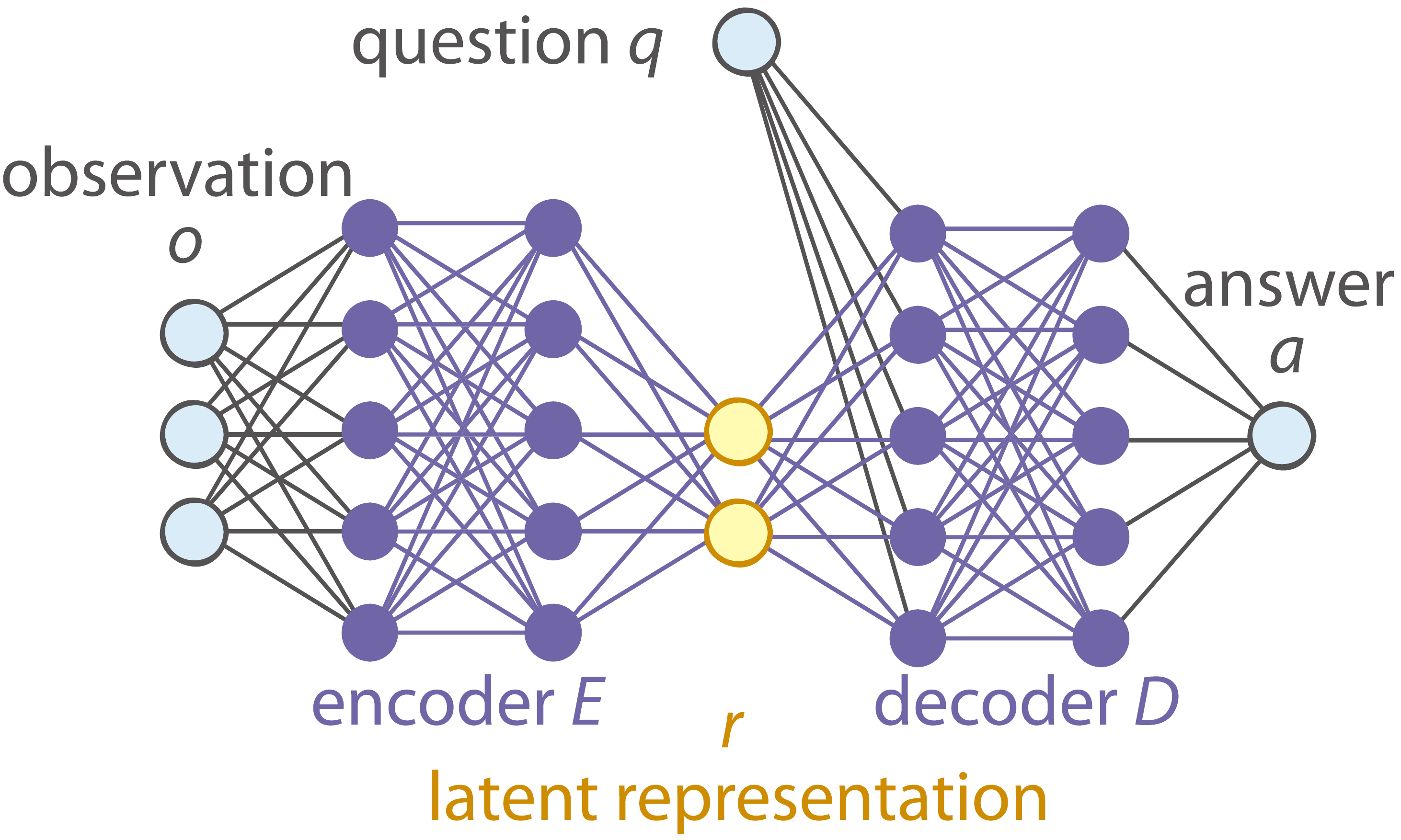}
  }
\caption{{\bf Learning physical representations.} {\bf (a) Human learning.} A physicist compresses experimental observations into a simple representation (\emph{encoding}). When later asked any question about the physical setting, the physicist should be able to produce a correct answer using only the representation and not the original data. We call the process of producing the answer from the representation \emph{decoding}. 
For example, the observations may be the first few seconds of the trajectory of a particle moving with constant speed; the representation could be the parameters ``speed $v$'' and ``initial position $x_0$'' and the question could be ``where will the particle be at a later time $t'$?'' 
{\bf (b) Neural network structure for \PN{}.} 
Observations are encoded as real parameters fed to an encoder (a \emph{feed-forward neural network}, see Appendix~\ref{app:NN}), which compresses the data into a representation (\emph{latent representation}). The question is also encoded in a number of real parameters, which, together with the representation, are fed to the decoder network to produce an answer. 
(The number of neurons depicted is not representative.)
  }
 \label{fig:scinet_general}
\end{figure*}

Over the last few years, neural networks have become the dominant method in machine learning and they have successfully been used to tackle complex problems in classical as well as quantum physics (see Appendix~\ref{sec:comparison} for further discussions). Conversely, neural networks may also lead to new insights into how the human brain develops physical intuition from observations~\cite{Bates_humans_2015,Wu_galileo_2015, Bramley2018,rempe_learning_2019,kissner_adding_2019,ehrhardt_unsupervised_2018,ye_interpretable_2018}. Very recently, physical variables were extracted in an unsupervised way from time series data of dynamical systems in~\cite{zheng_unsupervised_2018}.  

Our goal in this work is to minimize the extent to which prior assumptions about physical systems impose structure on the machine learning system. 
Eliminating assumptions that may not be satisfied for all physical systems, such as assuming that particles only interact in a pairwise manner, is necessary for the long-term goal of an artificial intelligence physicist (see~\cite{wu_toward_2018} for recent progress in this direction) that can be applied to any system without a need for adaptions and might eventually contribute to progress in the foundations of physics. 
Very recently,  neural networks were used in this spirit to detect differences between observed data and a reference model~\cite{de_simone_guiding_2019, dagnolo_learning_2019}.
However, there is a tradeoff between generality and performance, and the performance of the machine learning system proposed here --- based on autoencoders~\cite{Bengio2012,hinton_reducing_2006,Higgins2017} --- is not yet comparable to more established approaches that are adapted to specific physical systems.

\textbf{Modelling the physical reasoning process.} This work makes progress towards an interpretable artificial intelligence agent that is unbiased by prior knowledge about physics by proposing to focus on the human physical modelling process itself, rather than on specific physical systems. We formalize a simplified physical modelling process, which we then translate into a neural network architecture. This neural network architecture can be applied to a wide variety of physical systems, both classical and quantum, and is flexible enough to accommodate different additional desiderata on representations of the system that we may wish to impose.

We start by considering a simplified version of the physical modelling process, pictured in Figure~\ref{fig:scinet_general}a. Physicists' interactions with the physical world take the form of experimental observations (e.g.\ a time series $(t_i,x(t_i))_{i \in \{1,\dots,N\}}$ describing the motion of a particle at constant speed). The models physicists build do not deal with these observations directly, but rather with a representation of the underlying physical state of the observed system (e.g.\ the two parameters \emph{initial position} and \emph{speed}, $(x_0, v)$). Which parameters are used is an important part of the model, and we will give suggestions about what makes a good representation below. Finally, the model specifies how to make predictions (i.e., answer questions) based on the knowledge of the physical state of the system (e.g.\ ``where is the particle at time $t'$?'').
More formally, this physical modelling process can be regarded as an ``encoder'' $E: \mathcal O \to \mathcal R$ mapping the set of possible observations $\mathcal O$ to representations $\mathcal R$, followed by a  ``decoder''  $D: \mathcal R \times \mathcal Q \to \mathcal A$ mapping the sets of all possible representations $\mathcal R$ and questions $\mathcal Q$ to answers $\mathcal A$.

\textbf{Network structure.}  This modelling process can be translated directly into a neural network architecture, which we refer to as \PN{} in the following (Figure~\ref{fig:scinet_general}b). The encoder and decoder are both implemented as feed-forward neural networks. The resulting architecture, except for the question input, resembles an autoencoder in representation learning~\cite{Bengio2012,hinton_reducing_2006}, and more specifically the architecture in~\cite{eslami_neural_2018}. During the training, we provide triples of the form $(o,q,a_{\textnormal{corr}}(o,q))$ to the network, where $a_{\textnormal{corr}}(o,q) \in \mathcal A$ is the correct reply to question $q  \in \mathcal Q$ given the observation $o \in \mathcal O$. The learned parameterization is typically called latent representation~\cite{Bengio2012,hinton_reducing_2006}. To feed the questions into the neural network, they are encoded into a sequence of real parameters. Thereby, the actual representation of a single question is irrelevant as long as it allows the network to distinguish questions that require different answers.

It is crucial that the encoder is completely free to choose a latent representation itself, instead of us imposing a specific one.  Because neural networks with at least one hidden  layer composed of sufficiently many neurons can approximate any continuous function arbitrarily well~\cite{HORNIK1991251}, the fact that the functions $E$ and $D$ are implemented as neural networks does not significantly restrict their generality. However, unlike in an autoencoder, the latent representation need not describe the observations completely; instead, it only needs to contain the information necessary to answer the questions posed.

This architecture allows us to extract knowledge from the neural network: all of the useful information is stored in the representation, and the size of this representation is small compared to the total number of degrees of freedom of the network. This allows us to interpret the learnt representation. Specifically, we can compare \PN{}{\it 's} latent representation to a hypothesized parameterization to obtain a simple map from one to the other. If we do not even have any hypotheses about the system at hand, we may still gain some insights solely from the number of required parameters or from studying the change in the representation when manually changing the input, and the change in output when manually changing the representation (as in e.g.~\cite{Higgins2017}).

\begin{figure}
\centering
 \subfloat[One qubit\label{fig:one_qubit}]{%
  \includegraphics[width=.5\linewidth]{./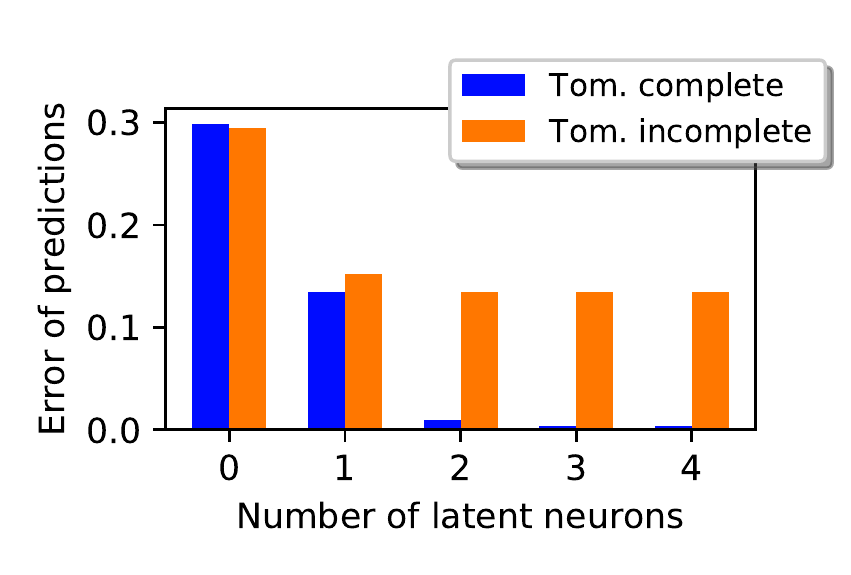}%
}\hfill
\subfloat[Two qubits\label{fig:two_qubit}]{%
  \includegraphics[width=.5\linewidth]{./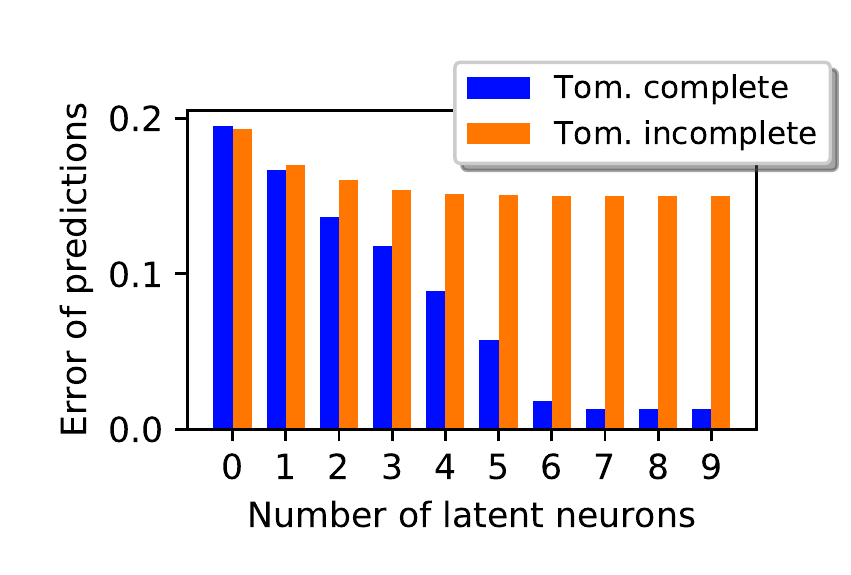}%
}
\caption{{\bf Quantum tomography.}  \PN{}  is given tomographic data for one or two qubits, as shown in part a) and b) of the figure, respectively, and an operational description of a measurement as a question input and has to predict the probabilities of outcomes for this measurement. The plots show the root mean square error of \PN{}\textit{'s} measurement predictions for test data as a function of the number of latent neurons. In the tomographically complete case, \PN{} recovers the number of (real) degrees of freedom required to describe a one and a two qubit state (which are two and six, respectively). Tomographically incomplete data can be recognized, since the prediction error remains high as one increases the number of latent neurons.}
\label{fig:qubits_tomography}
\end{figure}

\textbf{Desired properties for a representation.} For \PN{} to produce physically useful representations, we need to formalize what makes a good parameterization of a physical system, i.e., a good latent representation. We stress that this is not a property of a physical system, but a choice we have to make. We will give two possible choices  below. 

Generally, the latent representation should only store the minimal amount of information that is sufficient to correctly answer all questions in $\mathcal Q$. For \emph{minimal sufficient uncorrelated representations}, we additionally require that the latent neurons be statistically independent from each other for an input sampled at random from the training data, reflecting the idea that physically relevant parameters describe aspects of a system that can be varied independently and are therefore uncorrelated in the experimental data. Under this independence assumption, the network is then motivated to choose a representation that stores different physical parameters in different latent neurons. 
We formalize these demands in Appendix~\ref{sec:rep} and show, using techniques from differential geometry, that the number of latent neurons equals the number of underlying degrees of freedom in the training data  that are needed to answer all questions $\mathcal Q$. To implement these requirements in a neural network, we use well-established methods from representation learning, specifically disentangling variational autoencoders~\cite{Kingma2013,Higgins2017} (see Appendix~\ref{app:VAE} for details).

Alternatively, for situations where the physically relevant parameters can change, either over time or by some time-independent update rule, we might prefer a representation with a simple such update rule. We explain below how this requirement can be enforced.

\begin{figure*}[t] 
\centering
   \subfloat[\label{fig:NN_structure_for_dynamic_variables}]{%
  \includegraphics[height=3cm, width=7cm]{./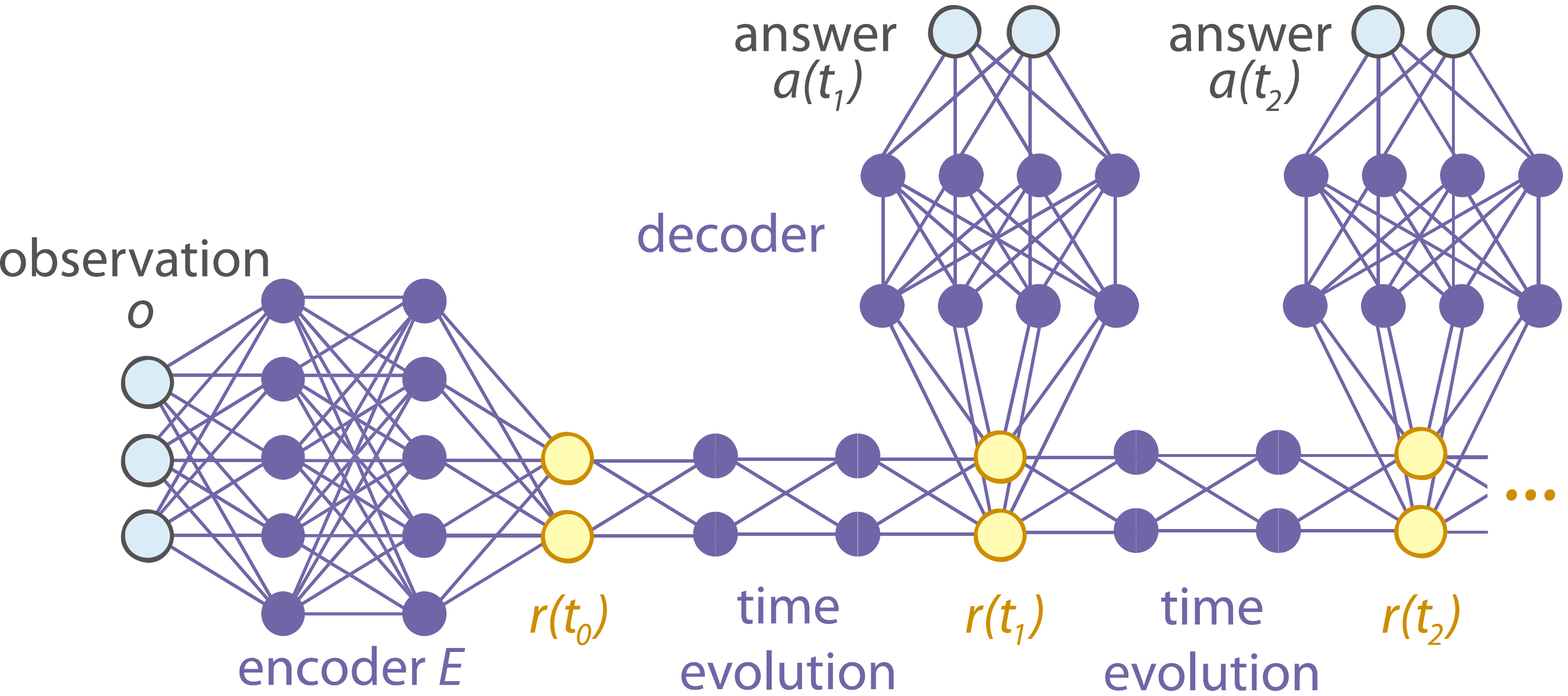}%
}\hfill
 \subfloat[\label{fig:copernicus}]{%
  \includegraphics[height=3cm, width=3.5cm]{./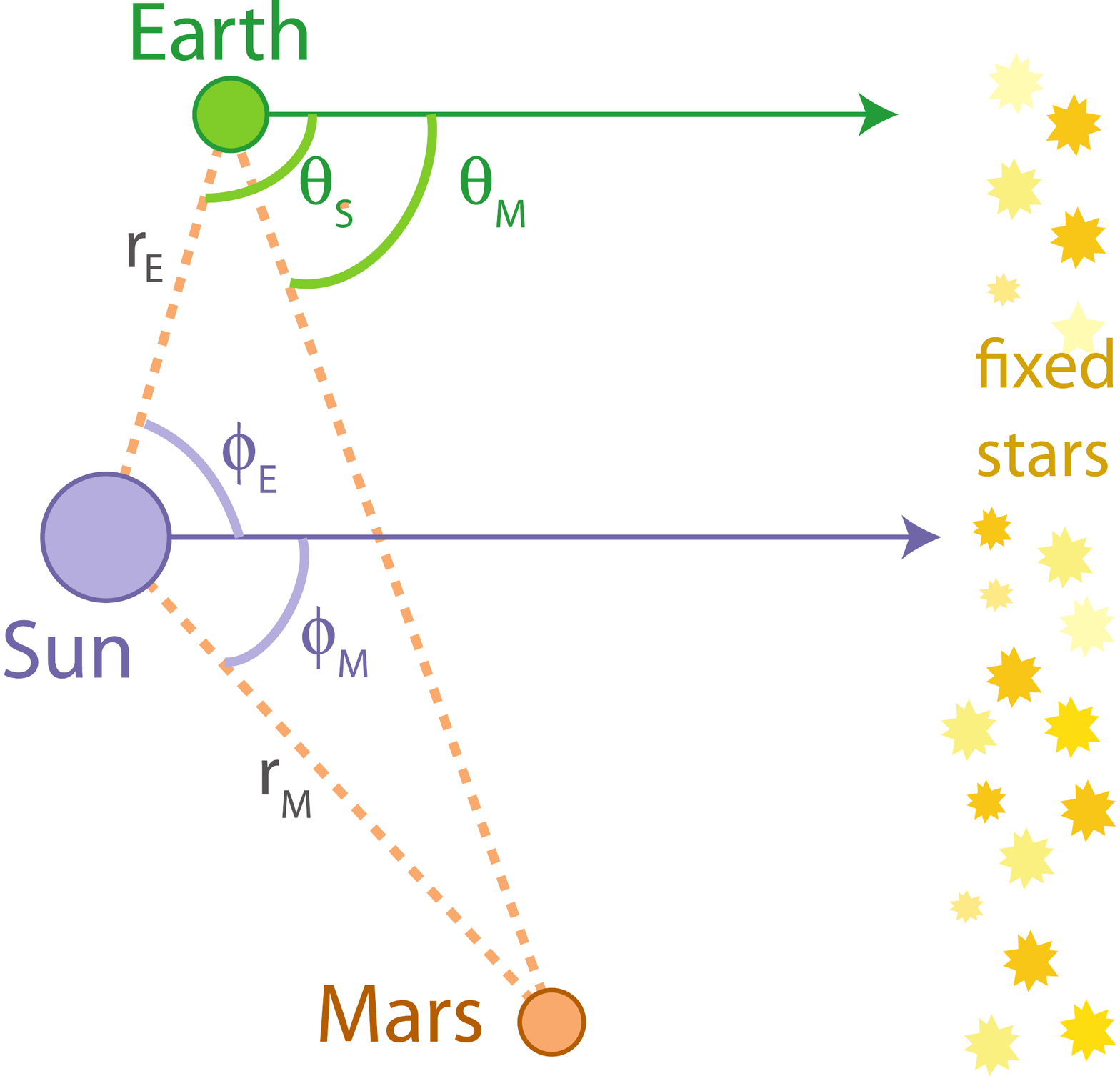}%
  }\hfill
\subfloat[ \label{fig:copernicus_phi_plot}]{%
  \includegraphics[height=3cm,width=6cm]{./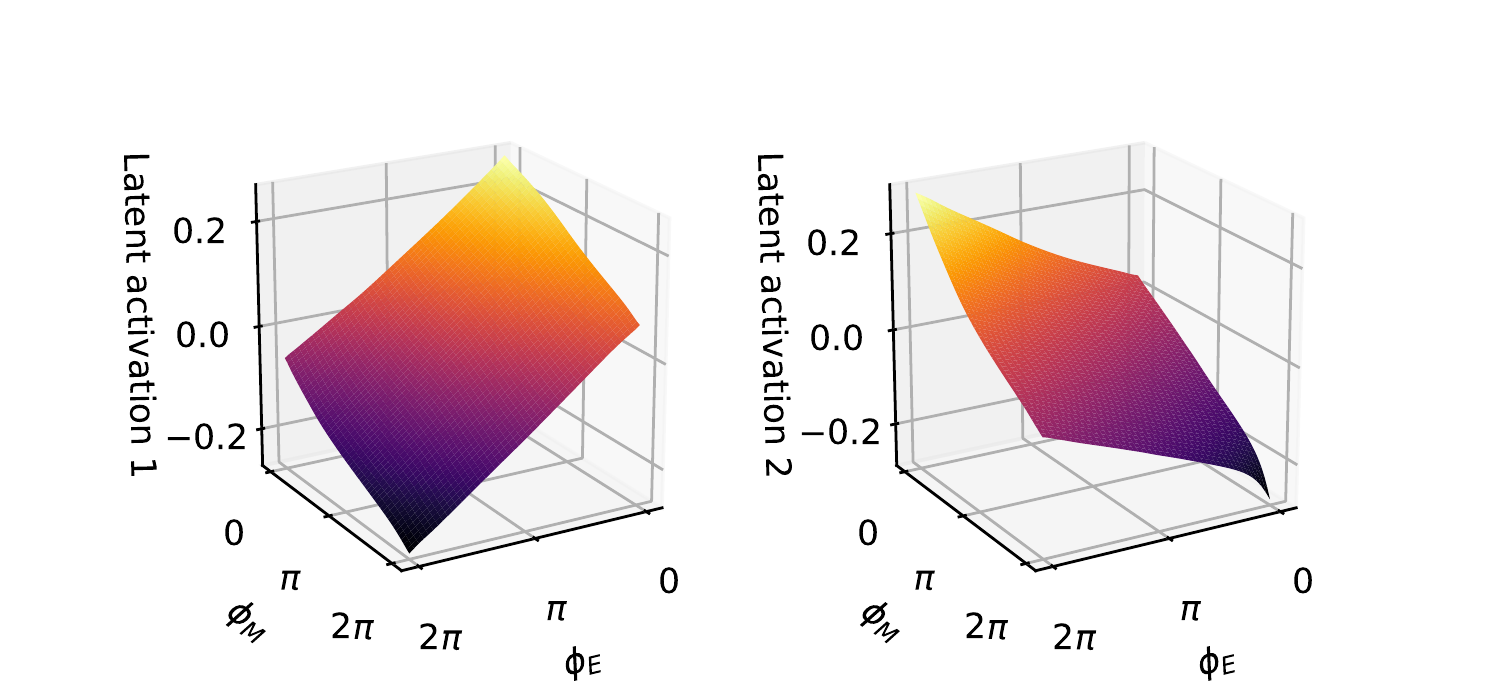}%
}
 \caption{{\bf Heliocentric model of the solar system.} \PN{} is given the angles of the Sun and Mars as seen from Earth at an initial time $t_0$ and has to predict these angles for later times.  
 {{\bf (a) Recurrent version of \PN{} for time-dependent variables.}  Observations are encoded into a simple representation $r(t_0)$ at time $t_0$. Then, the representation is evolved in time to $r(t_1)$ and a decoder is used to predict $a(t_1)$, and so on. In each (equally spaced) time step, the same time evolution network and decoder network are applied. }
 {\bf (b)~Physical setting.} The heliocentric angles $\phi_E$ and $\phi_M$ of the Earth and Mars are observed from the Sun; the angles $\theta_S$ and $\theta_M$ of the Sun and Mars are observed from Earth. All angles are measured relative to the fixed star background.
 {\bf (c)~Representation learned by \PN{}.} The activations $r_{1, 2}(t_0)$ of the two latent neurons at time $t_0$ (see Figure~\ref{fig:NN_structure_for_dynamic_variables}) are plotted as a function of the heliocentric angles $\phiE$ and $\phiM$. The plots show that the network stores and evolves parameters that are linear combinations of the heliocentric angles.} 
 \label{fig:copernicus_all}
\end{figure*}

\textbf{Results.} To demonstrate that \PN{} helps to recover relevant concepts in physics by providing the relevant physical variables, both in quantum- and classical-mechanical settings, we consider four toy examples from different areas of physics. In summary, we find:
(i)~given a time series of the positions of a damped pendulum, \PN{} can predict future positions with high accuracy and it uses the relevant parameters, namely frequency and damping factor, separately in two of the latent neurons (and sets the activation of unnecessary latent neurons to zero); 
(ii)~\PN{} finds and exploits conservation laws: it uses the total angular momentum to predict the motion of two colliding particles;
(iii)~given measurement data from a simple quantum experiment, \PN{} can be used to determine the dimension of the underlying unknown quantum system and to decide whether a set of measurements is tomographically complete, i.e., whether it provides full information about the quantum state;
(iv)~given a time series of the positions of the Sun and Mars as observed from Earth, \PN{} switches to a heliocentric representation --- that is, it encodes the data into the angles of the two planets as seen from the Sun.
The results show that \PN{} finds, without having been given any prior information about the specific physical systems, the same quantities that we use in physics textbooks to describe the different settings. We also show that our results are robust against noise in the experimental data.
To illustrate our approach, we will now describe two of these settings in some depth. For detailed descriptions of the four different settings, the data generation, interpretation and additional background information, we refer to Appendix~\ref{app:sec:examples}. 

In all our examples, the training data we use is operational and could be generated from experiments, i.e., the correct answer is the one observed experimentally. Here, we use simulations instead because we only deal with classical and quantum mechanics, theories whose predictions are experimentally well tested in the relevant regimes. One might think that using simulated data would restrict \PN{} to rediscovering the theory used for data generation. However, in particular for quantum mechanics, we are interested in finding conceptually different formulations of the theory with the same predictions.

\textbf{Quantum state tomography.} In quantum mechanics, it is not trivial to construct a simple representation of the state of a quantum system from measurement data, a task called quantum tomography~\cite{paris_quantum_2004}. In the following, we will show that \PN{} finds representations of arbitrary (pure) one- and two-qubit states. To ensure that no prior knowledge about quantum physics is required to collect the measurement data, we assume an operational setting in which we have access to two devices in a lab, where one device can create (many copies of) a quantum system in a certain state depending on the chosen parameters of the device. The other device performs binary measurements on the quantum system. The input to \PN{} consists of the outcome probabilities of a random fixed set of ``reference measurements'' on quantum systems in the unknown quantum state $\psi$. As a question input, we provide a parameterization of a measurement $\omega$ (one may think of the setting of the dials and buttons of the measurement device). \PN{} has to predict the outcome probability of the measurement $\omega$ on a quantum system in the state $\psi$. We train \PN{} with different pairs $(\omega,\psi)$ for one and two qubits.
The results are shown in Figure~\ref{fig:qubits_tomography}. Training different networks with different numbers of latent neurons, we can observe how the quality of the predictions (after training has been completed) improves as we allow for more parameters in the representation of $\psi$. This allows us to gain relevant information, without previous hypotheses about the nature of this representation (for example, whether it is a vector in a Hilbert space).

If the reference measurements are tomographically complete, meaning that they are sufficient to reconstruct a complete representation of the underlying quantum system, the plots in Figure~\ref{fig:qubits_tomography} show a drop in prediction error when the number of latent neurons is increased up to two and six for the cases of one and two qubits, respectively~\footnote{In the case of a single qubit, there is an additional small improvement in going from two to three latent neurons: 
this is a technical issue caused by the fact that (finite size) neural networks cannot represent discontinuous functions (see Appendix~\ref{app:cyclic_rep}). The same likely applies in the case of two qubits, going from 6 to 7 latent neurons.}.
This is in accordance with the number of degrees of freedom required to describe a one- or a two-qubit state in our current theory of quantum mechanics.
For the case where the set of measurements is tomographically incomplete, it is not possible for \PN{} to predict the outcome of the final measurement perfectly regardless of the number of latent neurons. This means that purely from operational data, we can make a statement about the tomographic completeness of measurements and about the number of degrees of freedom of the underlying unknown quantum system.


\textbf{Enforcing a simple time evolution.} As mentioned above, if the physically relevant parameters can change, we can enforce a representation that has a simple update rule. For illustration, we will consider time evolution here, but more general update rules are possible. To accomodate changing physical parameters, we need to extend the latent representation as shown in Figure~\ref{fig:NN_structure_for_dynamic_variables}. Instead of a single latent represetation with a decoder attached to it, we now have many latent representations that are generated from the intial representation by a time evolution network. Each representation has a decoder attached to it to produce an answer to a question. Because we only want the parameters, but not the physical model, to change in time, all time evolution steps and decoders are identical, i.e., they implement the same function. The encoder, time evolution network, and decoder are trained simultaneously. To enforce parameters with a simple time evolution, we restrict the time evolution network to implementing very simple functions, such as addition of a constant~\footnote{For a general system, there might not exist a representation that admits such a simple time evolution. In such a case, one may have to define a complexity measure for the time update rule and search over different rules successively increasing the complexity until \PN{} can achieve good prediction accuracy.}. 

\textbf{Heliocentric solar system.}
In the 16th century, Copernicus used observations of the positions of different planets on the night sky (Figure~\ref{fig:copernicus}) to hypothesize that the Sun, and not the Earth, is at the centre of our solar system. This heliocentric view was confirmed by Kepler at the start of the 17th century  based on astronomic data collected by Brahe, showing that the planets move around the Sun in simple orbits. Here, we show that \PN{} similarly uses heliocentric angles when forced to find a representation for which the time evolution of the variables takes a very simple form, a typical requirement for time-dependent variables in physics.

The observations given to \PN{} are angles $\theta_M(t_0)$ of Mars and $\theta_S(t_0)$ of the Sun as seen from Earth at a starting time $t_0$ (which is varied during training). The time evolution network is restricted to addition of a constant (the value of which is learned during training).
At each time step $i$, \PN{} is asked to predict the angles as seen from Earth at the time $t_i$ using only its representation $r(t_i)$. Because this question is constant, we do not need to feed it to the decoder explicitly.

We train \PN{} with randomly chosen subsequences of weekly (simulated) observations of the angles $\thetaM$ and $\thetaS$  within Copernicus' lifetime (3665 observations in total).  For our simulation, we assume circular orbits of Mars and Earth around the Sun.
Figure~\ref{fig:copernicus_phi_plot} shows the learned representation and confirms that \PN{} indeed stores a linear combination of \emph{heliocentric angles}. We stress that the training data only contains angles observed from Earth, but \PN{} nonetheless switches to a heliocentric representation.

\textbf{Conclusion.} In this work, we have  shown that \PN{} can be used to recover physical variables from experimental data in various physical toy settings. 
The learnt representations turned out to be the ones commonly used in physics textbooks, under the assumption of uncorrelated sampling. In future work we plan to extend our approach to data where the natural underlying parameters are correlated in the training distribution. The separation of these parameters in the representation found by \PN{}  requires the development of further operational criteria for disentangling latent variables.
In more complex scenarios, the methods introduced here may lead to entirely novel representations, and extracting human physical insight from such representations remains challenging.
This could be addressed using methods from symbolic regression~\cite{Koza1994} to obtain analytical expressions for the encoder and decoder maps, or for a map between a hypothesized and the actual representation. Alternatively, methods such as the ones presented in~\cite{krenn_selfies:_2019, beny_learning_2019} could help to improve the interpretability of the representation.
Following this direction, it might eventually become possible for neural networks to produce insights expressed in our mathematical language.

\setlength{\parskip}{3pt}

 \textbf{Acknowledgments.} We would like to thank Alessandro Achille, Serguei Beloussov, Ulrich Eberle, Thomas Frerix, Viktor Gal, Thomas H\"aner, Maciej Koch-Janusz, Aurelien Lucchi,  Ilya Nemenman, Joseph M. Renes, Andrea Rocchetto,  Norman Sieroka, Ernest Y.-Z. Tan, Jinzhao Wang and Leonard Wossnig  for helpful discussions. We acknowledge support from the Swiss
 National Science Foundation through  SNSF project No.\ $200020\_165843$,
 the Swiss National Supercomputing Centre (CSCS) under project ID da04,
  and through the National Centre of
 Competence in Research \emph{Quantum Science and Technology} (QSIT). 
 LdR and RR furthermore acknowledge support from  the FQXi grant \emph{Physics of
 the observer}.
 TM acknowledges support from ETH Z{\"u}rich and the ETH Foundation through the \emph{Excellence Scholarship \& Opportunity Programme}. \\
 
 The source code and the training data are available at:
 \url{https://github.com/eth-nn-physics/nn_physical_concepts}. See also Appendix~\ref{app:impl_details} for the implementation details. \PN{} worked well on all tested examples, i.e., we did not post-select examples based on whether \PN{} worked or not.\\ 
\appendix
\section{Implementation} \label{app:impl_details}
The neural networks used in this work are specified by the input size for question and observation input, the output size, the number of latent neurons, and the sizes of encoder and decoder. The number of neurons in the encoder and decoder are not expected to be important for the results, provided the encoder and decoder are large enough to not constrain the expressivity of the network (and small enough to be efficiently trainable).
The training is specified by the number of training examples, the batch size, the number of epochs, the learning rate and the value of the parameter $\beta$ (see section \ref{app:VAE}). To test the networks, we use a number of previously unseen test samples.
We give the values of these parameters for the examples presented in the main text (and described in detail in Section \ref{app:sec:examples}) in Table~\ref{Table:network_spec} and Table~\ref{Table:training_details}.

The source code, all details about the network structure and training process, and  pre-trained \PN{}{\textit s} are available at

\url{https://github.com/eth-nn-physics/nn_physical_concepts}. 

\noindent The networks were implemented using the Tensorflow library~\cite{tensorflow2015-whitepaper}. For all examples, the training process only takes a few hours on a standard laptop. 

\begin{table*}[!t] 
\renewcommand{\arraystretch}{1.4}
\centering
\begin{tabular}{lllllll}
Example&\pbox{10cm}{Observation \\ input size}&\pbox{10cm}{Question \\ input size} & \pbox{10cm}{\# Latent \\ neurons} &\pbox{10cm}{Output \\ size} &Encoder&Decoder	\\ \hline 
Pendulum & 50 & 1 & 3 & 1 & $[500,100]$ & $[100,100]$\\
Collision & 30 & 16 & 1 & 2 & $[150,100]$ & $[100,150]$\\
One qubit & 10 & 10 & 0-5 & 1 & $[100,100]$ & $[100,100]$\\
Two qubits & 30 & 30 & 0-9 & 1 & $[300,100]$ & $[100,100]$\\
Solar system & 2 & 0 & 2 & 2 & $[100,100]$ & $[100,100]$\\  								   
\end{tabular}

\caption{Parameters specifying the network structure. The first four examples use the network structure depicted in 
Figure~\ref{fig:scinet_general}, whereas the last example uses the structure shown in Figure~\ref{fig:NN_structure_for_dynamic_variables}. The notation $[n_1,n_2]$ is used to describe the number of neurons $n_1$ and $n_2$ in the first and the second hidden layer of the encoder (or the decoder), respectively.}
\label{Table:network_spec}

\end{table*}

\begin{table*}[!t] 
\renewcommand{\arraystretch}{1.4}
\centering
\begin{tabular}{lllllll}
Example & Batch size & Learning rate&$\beta$ &\# Epochs & 	\pbox{10cm}{\# Training \\ samples} & \pbox{10cm}{\# Test \\ samples} \\ \hline 
Pendulum & $512$ & $10^{-3}$ & $10^{-3}$ & $1000$ & $95000$ & $5000$\\
Collision & $500$ & $(5\cdot10^{-4},10^{-4})$ & $0$  & $(100,50)$ & $490000$ & $10000$\\
One qubit & $512$ & $(10^{-3},10^{-4})$ & $10^{-4}$ & $(250,50)$ & $95000$ & $5000$\\
Two qubits & $512$ & $(10^{-3},10^{-4})$ & $10^{-4}$ & $(250,50)$ & $490000$ & $10000$\\
Solar system & $256-2048$ & $10^{-5}-10^{-4}$ & $0.001-0.1$ & $15000$ & $95000$ & $5000$\\ 									   
\end{tabular}

\caption{Parameters specifying the training process. For training with two phases, the notation $(p_1, p_2)$ refers to the parameters in the first and second phase, respectively. The last example uses five training phases specified in detail in \cite{copernicus_details}.}
\label{Table:training_details}

\end{table*}

\section{Detailed comparison with previous work} \label{sec:comparison} 

Neural networks have become a standard tool to tackle problems where we want to make predictions without following a particular algorithm or imposing structure on the available data (see for example~\cite{nielsenneural,lecun_deep_2015,silver_mastering_2016}) and they have been applied to a wide variety of problems in physics.
For example, in condensed matter physics and generally in many-body settings, neural networks have proven particularly useful to characterize phase transitions (see~\cite{dunjko_machine_2017} and references therein) and to learn local symmetries~\cite{decelle_learning_2019}.

In quantum optics, automated search techniques and reinforcement-learning based schemes have been used to generate new experimental setups~\cite{krenn_automated_2016,melnikov_active_2018}. 
Projective simulation~\cite{briegel_projective_2012}  is used in~\cite{melnikov_active_2018} to autonomously discover experimental building blocks with maximum versatility.

Closer to our work, neural networks have also been used to efficiently represent wave functions of particular quantum systems~\cite{carleo_solving_2017,rocchetto_learning_2018,glasser_neural-network_2018,carleo_constructing_2018, cai_approximating_2018,huang_neural_2017,deng_machine_2017,schmitt_quantum_2018,torlai_many-body_2018,nomura_restricted-boltzmann-machine_2017,deng_quantum_2017,gao_efficient_2017,carrasquilla_reconstructing_2019,beach_qucumber:_2019,torlai_machine_2019}. In particular, in~\cite{rocchetto_learning_2018}, variational autoencoders are used to approximate the distribution of the measurement outcomes of a specific quantum state for a fixed measurement basis and the size of the neural network can provide an estimate for the complexity of the state. In contrast, our approach does not focus on an efficient representation of a given quantum state and it is not specifically designed for learning representations of quantum systems. 
Nevertheless, \PN{} can be used to produce representations of arbitrary states of simple quantum systems without retraining.
This allows us to extract information about the degrees of freedom required to represent any state of a (small) quantum system.

Another step towards extracting physical knowledge in an unsupervised way is presented in~\cite{koch-janusz_mutual_2018}. The authors show how the relevant degrees of freedom of a system in classical statistical mechanics can be extracted under the assumption that the input is drawn from a Boltzmann distribution. 
They make use of information theory to guide the unsupervised training of restricted Boltzmann machines, a class of probabilistic neural networks, to approximate probability distributions.

A different line of work has focused on using neural networks and other algorithmic techniques to better understand how humans are able to gain an intuitive understanding of physics~\cite{hamrick_internal_2011,battaglia_simulation_2013,Bates_humans_2015,Wu_galileo_2015, Ullman2018, Bramley2018,rempe_learning_2019,kissner_adding_2019,ehrhardt_unsupervised_2018,ye_interpretable_2018}.

Very recently, physical variables were extracted in an unsupervised way from time series data of dynamical systems~\cite{zheng_unsupervised_2018}. 
The network structure used in~\cite{zheng_unsupervised_2018} is built on interaction networks~\cite{battaglia_interaction_2016,chang_compositional_2016,raposo_discovering_2017} and it is well adapted to physical systems consisting of several objects interacting in a pair-wise manner.
The prior knowledge included in the network structure allows the network to generalise to situations that differ substantially from those seen during training. 

In the last few years, significant progress was made in extracting dynamical equations from experimental data~\cite{daniels_automated_2015,brunton_discovering_2016,lusch_deep_2017,takeishi_learning_2017,otto_linearly-recurrent_2017,raissi_deep_2018,zhang_learning_2019,raissi_hidden_2018,raissi_physics-informed_2019}, which is known to be an NP-hard problem~\cite{cubitt_extracting_2012}.
Where the most of these works search for dynamical models in the input data, in~\cite{lusch_deep_2017,takeishi_learning_2017,otto_linearly-recurrent_2017} neural networks are used to find a new set of variables such that the evolution of the new variables is approximately linear (motivated by Koopman operator theory). 
Our example given in Section~\ref{sec:copernicus} uses a similar network structure as the one used in~\cite{lusch_deep_2017,takeishi_learning_2017,otto_linearly-recurrent_2017}, which corresponds to a special case of \PN{} with a trivial question input and a latent representation that is evolved in time. 
The concept of evolving the system in the latent representation has also been used in machine learning to extract the relevant features from video data~\cite{fraccaro_disentangled_2017}.
A further step towards an artificial intelligence physicist was taken in~\cite{wu_toward_2018}, where data from complex environments is automatically separated into parts corresponding to systems that can be explained by simple physical laws. The machine learning system then tries to unify the underlying ``theories'' found for the different parts of the data.

\section{Minimal representations} \label{sec:rep}
Here, we describe some of the theoretical considerations that went into designing \PN{} and helping it to find useful representations that encode physical principles. 
Given a data set, it is generally a complex task to find a simple representation of the data that contains all the desired information.
\PN{} should recover such representations by itself;
however, we encourage it to learn ``simple'' representations during training.
To do so, we have to specify the desired properties of a representation.
In this, our approach follows the spirit of several works on representation learning theory~\cite{Kingma2013,hinton_reducing_2006,Bengio2012,bengio_deep_2013,Higgins2017,Achille2018}. 

For the theoretical analysis, we introduce some additional structure on the data that is required to formulate the desired properties of a representation.  
We consider real-valued data, which we think of as being sampled from some unknown probability distribution. 
In other words, we assign random variables to the observations $O$, the questions $Q$, the latent representation $R$, and the answers $A$. We use the convention that a random variable $X=(X_1,\dots,X_{|X|})$ takes samples in  $\mathcal{X} \subset \mathbb{R}^{|X|}$, where $|X|$ denotes the dimension of the ambient space of $X$. In particular, $|R|$ will correspond to the number of neurons in the latent representation.

We require the following properties for an \textit{uncorrelated (sufficient) representation} $R$ (defined by an encoder mapping $E:\mathcal{O} \rightarrow \mathcal{R}$) for the data described by the triple ($O,Q,\lab)$, where we recall that the function $\lab: \mathcal{O}\times\mathcal{Q}\rightarrow\mathcal{A}$ sends an observation $o \in \mathcal{O}$ and a question $q \in \mathcal{Q}$ to the correct answer $a \in \mathcal{A}$.

\begin{enumerate}
\item \label{prop:sufficient} \textbf{Sufficient (with smooth decoder)}:
There exists a smooth map $D:\mathcal{R} \times \mathcal{Q} \mapsto \mathcal{A} $, such that $D(E(o),q)=\lab (o,q)$ for all possible observations $o \in \mathcal{O}$ and questions $q \in \mathcal{Q}\, .$
 \item \label{prop:uncorrelated}   \textbf{Uncorrelated}: The elements in the set $\{R_1,R_2,\dots,R_{|R|}\}$ are mutually independent.
\end{enumerate}
Property~\ref{prop:sufficient} asserts that  the encoder map $E$ encodes all information of the observation $o \in \mathcal{O}$  that is necessary to reply to all possible questions $q \in \mathcal{Q}$. We require the decoder to be smooth, since this allows us to give the number of parameters stored in the latent representation a well defined meaning in terms of a dimension (see Section~\ref{app:uniquess}).

Property~\ref{prop:uncorrelated} means that knowing some variables in the latent representation does not provide any information about any other latent variables; note that this depends on the distribution of the observations.

We define a {\it minimal uncorrelated representation} $R$ as an uncorrelated (sufficient) representation with a minimal number of parameters $|R|$.
This formalizes what we consider to be a ``simple'' representation of physical data.

Without the assumption that the decoder is smooth, it would, in principle, always be sufficient to have a single latent variable, since a real number can store an infinite amount of information. Hence, methods from standard information theory, like the information bottleneck~\cite{tishby_information_2000,tishby_deep_2015,shwartz-ziv_opening_2017}, are not the right tool to give the number of variables a formal meaning.
In Section~\ref{app:uniquess}, we use methods from differential geometry to show that the number of variables $|R|$ in a minimal (sufficient) representation corresponds to the number of relevant degrees of freedom in the observation data required to answer all possible questions.

\subsection{Interpretation of the number of latent variables} \label{app:uniquess}

Above, we have required that the latent representation should contain a minimal amount of latent variables; we now relate this number to the structure of the given data. Proposition~\ref{prop:latent_number} below asserts that the minimal number of latent neurons corresponds to the relevant degrees of freedom in the observed data required to answer all the questions that may be asked.

For simplicity,  we describe the data with sets instead of random variables here. Note that the probabilistic structure was only used for Property~\ref{prop:uncorrelated} in Section~\ref{sec:rep}, whereas here, we are only interested in the number of latent neurons and not in that they are mutually independent. We therefore consider the triple $(\mathcal{O},\mathcal{Q},\lab)$, where  $\mathcal{O}$ and  $\mathcal{Q}$ are the sets containing the observation data and the questions respectively, and the function $\lab:(o,q)\mapsto a$ sends an observation $o \in \mathcal{O}$ and a question $q \in \mathcal{Q}$ to the correct reply $a \in \mathcal{A}$.

Intuitively, we say that the triple $(\mathcal{O},\mathcal{Q},\lab)$ has dimension at least $n$ if there exist questions  in $\mathcal Q$ that are able to capture $n$ degrees of freedom from the observation data $\mathcal O$. Smoothness of this ``oracle'' is a natural requirement, in the sense that we expect the dependence of the answers on the input to be robust under small perturbations. The formal definition follows. 

\begin{definition}[Dimension of a data set] \label{defi:data_dim}
Consider a data set described by the triple $(\mathcal O, \mathcal Q, \lab)$, where $\lab: \mathcal O \times \mathcal Q \to \mathcal A$, and all sets are real, $\mathcal{O} \subseteq \mathbb{R}^r,\mathcal{Q}\subseteq \mathbb{R}^s, \mathcal{A} \subseteq \mathbb{R}^t $. 
We say that  this triple has  dimension at least $n$ if there exists an $n$-dimensional submanifold 
$\mathcal{O}_n\subseteq \mathcal{O}$ and  questions $q_1,\dots,q_k \in \mathcal Q$ and a function  
\begin{align*}
f: \quad \mathcal{O}_n  &\to \mathcal A^{k}:=\overbrace{A \times A \times \dots \times A}^k\\
o &\mapsto [\lab(o,q_1),\dots,\lab(o,q_k)] 
\end{align*}
such that $f: \mathcal{O}_n  \to f(\mathcal{O}_n)$  is a diffeomorphism.
\end{definition}

\begin{prop} [Minimal representation for \PN{}] \label{prop:latent_number}
A (sufficient) latent representation for data described by a triple $(\mathcal{O} \subset \mathbb{R}^r,\mathcal{Q}\subset \mathbb{R}^s,\lab:\mathcal{O} \times \mathcal{Q} \rightarrow \mathcal{A} \subset \mathbb{R}^t)$ of dimension at least $n$ requires at least $n$ latent variables.
\end{prop}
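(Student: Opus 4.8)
The plan is to argue by contradiction using the invariance of dimension for manifolds (or, more carefully, the fact that a smooth map cannot increase dimension / a diffeomorphism preserves it). Suppose that $R$ is a sufficient latent representation for the data, defined by an encoder $E:\mathcal O\to\mathcal R\subseteq\mathbb R^{|R|}$ together with a smooth decoder $D:\mathcal R\times\mathcal Q\to\mathcal A$ satisfying $D(E(o),q)=\lab(o,q)$ for all $o,q$. Since the triple $(\mathcal O,\mathcal Q,\lab)$ has dimension at least $n$, Definition~\ref{defi:data_dim} gives an $n$-dimensional submanifold $\mathcal O_n\subseteq\mathcal O$, finitely many questions $q_1,\dots,q_k$, and a map $f:\mathcal O_n\to f(\mathcal O_n)$, $o\mapsto[\lab(o,q_1),\dots,\lab(o,q_k)]$, which is a diffeomorphism onto its image. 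The goal is to conclude $|R|\ge n$.

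The key step is to factor $f$ through the latent space. Define $g:\mathcal O_n\to\mathbb R^{|R|}$ by $g=E|_{\mathcal O_n}$ and $h:\mathbb R^{|R|}\to\mathcal A^k$ by $h(r)=[D(r,q_1),\dots,D(r,q_k)]$. Then by the sufficiency property $f=h\circ g$ on $\mathcal O_n$. Now $h$ is smooth (it is a tuple of compositions of the smooth decoder with fixed questions), and $f$ is a diffeomorphism onto $f(\mathcal O_n)$, which is therefore an $n$-dimensional submanifold of $\mathcal A^k$. Since $f$ is injective, $g$ must be injective on $\mathcal O_n$, so $g$ is a continuous injection of the $n$-manifold $\mathcal O_n$ into $\mathbb R^{|R|}$; already invariance of domain forces $|R|\ge n$. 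To make the smooth-category argument self-contained one can instead note that $f=h\circ g$ implies, by the chain rule at a point $o\in\mathcal O_n$ where $f$ is an immersion (it is, being a diffeomorphism), that $n=\operatorname{rank} df_o\le\operatorname{rank} dh_{g(o)}\le|R|$, since $dh_{g(o)}:\mathbb R^{|R|}\to\mathbb R^{|A|k}$ has rank at most $|R|$. Either route yields $|R|\ge n$, which is the claim.

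A couple of technical points need care. First, $E$ need not be assumed smooth, so the clean rank argument of the previous paragraph is not immediately available; the robust fix is to use topological invariance of domain (a continuous injection $\mathbb R^n\hookrightarrow\mathbb R^m$ requires $m\ge n$) applied to $g$, which only needs continuity of $E$ — and continuity is automatic for a feed-forward network, or can simply be assumed. Alternatively, if one does grant smoothness of $E$, the chain-rule bound $\operatorname{rank}(df_o)\le\min\{\operatorname{rank}(dg_o),\operatorname{rank}(dh_{g(o)})\}$ applied at any point gives the result directly. Second, one should observe that $f$ being a diffeomorphism onto its image guarantees a point where $df_o$ has full rank $n$, so the bound is not vacuous.

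The main obstacle I anticipate is purely the regularity bookkeeping: reconciling "encoder is just some function realized by a network" with the differential-geometric language of Definition~\ref{defi:data_dim}. The cleanest resolution is to lean on invariance of domain for the topological step (needing only continuity of $E$) rather than attempting a derivative-rank argument through $E$; the smoothness of $D$ is then used only to ensure that the composed map $h$ — and hence the factorization — lives in the right category so that $f(\mathcal O_n)$ is genuinely an $n$-dimensional submanifold. Everything else — the factorization $f=h\circ g$, injectivity of $g$, and the final dimension count — is routine.
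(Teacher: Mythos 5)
Your argument is essentially correct but takes a genuinely different route from the paper's. The paper also factors the answer map through the latent space via $\tilde D(r)=[D(r,q_1),\dots,D(r,q_k)]$ (your $h$), but then it never touches the encoder at all: it observes that sufficiency forces $\tilde D$ to map $\tilde D^{-1}(\mathcal I_n)$ \emph{onto} the $n$-dimensional manifold $\mathcal I_n=f(\mathcal O_n)$, while the easy case of Sard's theorem says the image of a smooth map from an at-most-$m$-dimensional domain with $m<n$ has measure zero in $\mathcal I_n$ --- contradicting surjectivity. That argument uses only smoothness of the decoder and the set-theoretic inclusion $f(\mathcal O_n)\subseteq\tilde D(\mathcal R_m)$; the encoder can be an arbitrary function. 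Your primary route (injectivity of $g=E|_{\mathcal O_n}$ plus invariance of domain) instead requires $E$ to be continuous, and your rank-based alternative requires it to be differentiable. In exchange, your topological route needs no smoothness of $D$ at all, only its existence --- a different and, for actual networks, equally natural trade-off.

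The one point I would push back on is the claim that continuity of $E$ ``can simply be assumed.'' Property~\ref{prop:sufficient} deliberately places no regularity condition on the encoder --- only the decoder is required to be smooth --- and the encoder constructed in the proof of the autoencoder corollary, $o\mapsto\mathrm{argmin}[H^{-1}(\{o\})]$, is in general discontinuous. Since Proposition~\ref{prop:latent_number} is a lower bound, it must exclude \emph{every} sufficient representation with fewer than $n$ latent variables, including those with discontinuous encoders; your argument as written only excludes the continuous ones. The repair is exactly the paper's move: abandon the factorization through $E$ and bound the image of the smooth map $h$ directly via Sard. With that substitution your proof coincides with the paper's; without it, it proves a slightly weaker statement.
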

\begin{proof}
By assumption, there is an  $n$-dimensional submanifold $\mathcal{O}_n\subset \mathcal{O}$ and $k$ questions $q_1,\dots,q_k$ such that $f: \mathcal{O}_n  \to \mathcal I_n :=f(\mathcal{O}_n)$ is a diffeomorphism. 
We prove the statement by contradiction: assume that there exists a (sufficient) representation described by an encoder $E:\mathcal{O}\rightarrow \mathcal{R}_m\subset \mathbb{R}^m$ with $m<n$ latent variables.
By sufficiency of the representation, there exists a smooth decoder $D:\mathcal{R}_m \times \mathcal{Q}\rightarrow \mathcal{A}$ such that $D(E(o),q)=\lab(o,q)$ for all observations $o\in \mathcal{O}$ and questions $q \in \mathcal{Q}$. 
We define the smooth map 
\begin{align*}
\tilde{D}: \mathcal{R}_m &\to\mathcal A^{k} \\ 
r &\mapsto [D(r,q_1),\dots,D(r,q_k)], 
\end{align*}
and denote the pre-image of $\mathcal{I}_n$ by $\tilde{\mathcal{R}}_m:=\tilde{D}^{-1}(\mathcal{I}_n)$.

By sufficiency of the representation, the restriction of the map $\tilde{D}$ to $\tilde{\mathcal{R}}_m$ denoted by $\tilde{D}|_{\tilde{\mathcal{R}}_m}:\tilde{\mathcal{R}} _m\rightarrow \mathcal{I}_n$ is a smooth and surjective map. 
However, by Sard's theorem (see for example~\cite{lee_introduction_2012}), the image $\tilde{D}({\tilde{\mathcal{R}}}_m)$ is of measure zero in $\mathcal{I}_n$, since the dimension of the domain ${\tilde{\mathcal{R}}}_m \subset \mathbb{R}^m$ is at most $m$, which is smaller than the dimension $n$ of the image $\mathcal{I}_n$. This contradicts the surjectivity of $\tilde{D}|_{\tilde{\mathcal{R}}_m}$ and finishes the proof. 
\end{proof}

\begin{figure*}
\centering
 \subfloat[ \label{fig:neuron}]{%
  \includegraphics[width=.4\linewidth]{./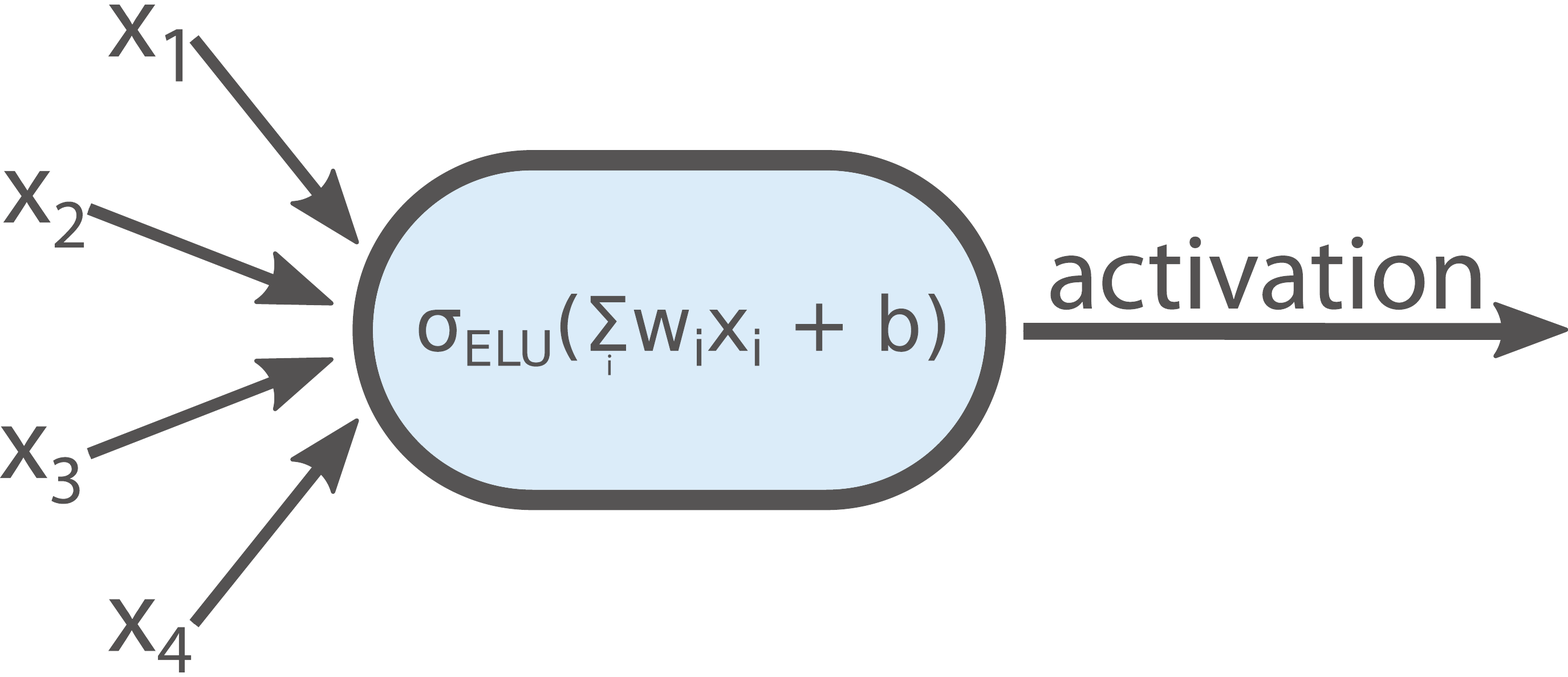}%
}\hfill
\subfloat[ \label{fig:ELU}]{%
  \includegraphics[width=.4\linewidth]{./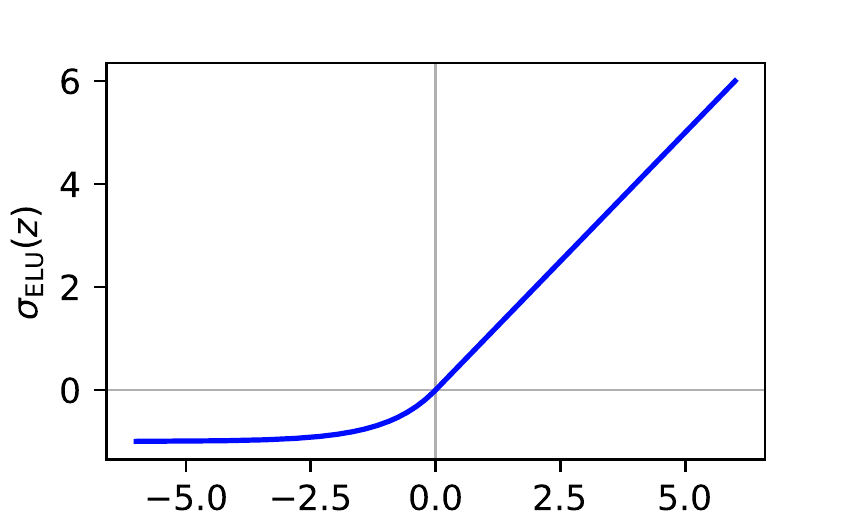}%
}\hfill
\subfloat[    \label{fig:feedforward_net}]{%
  \includegraphics[width=.35\linewidth]{./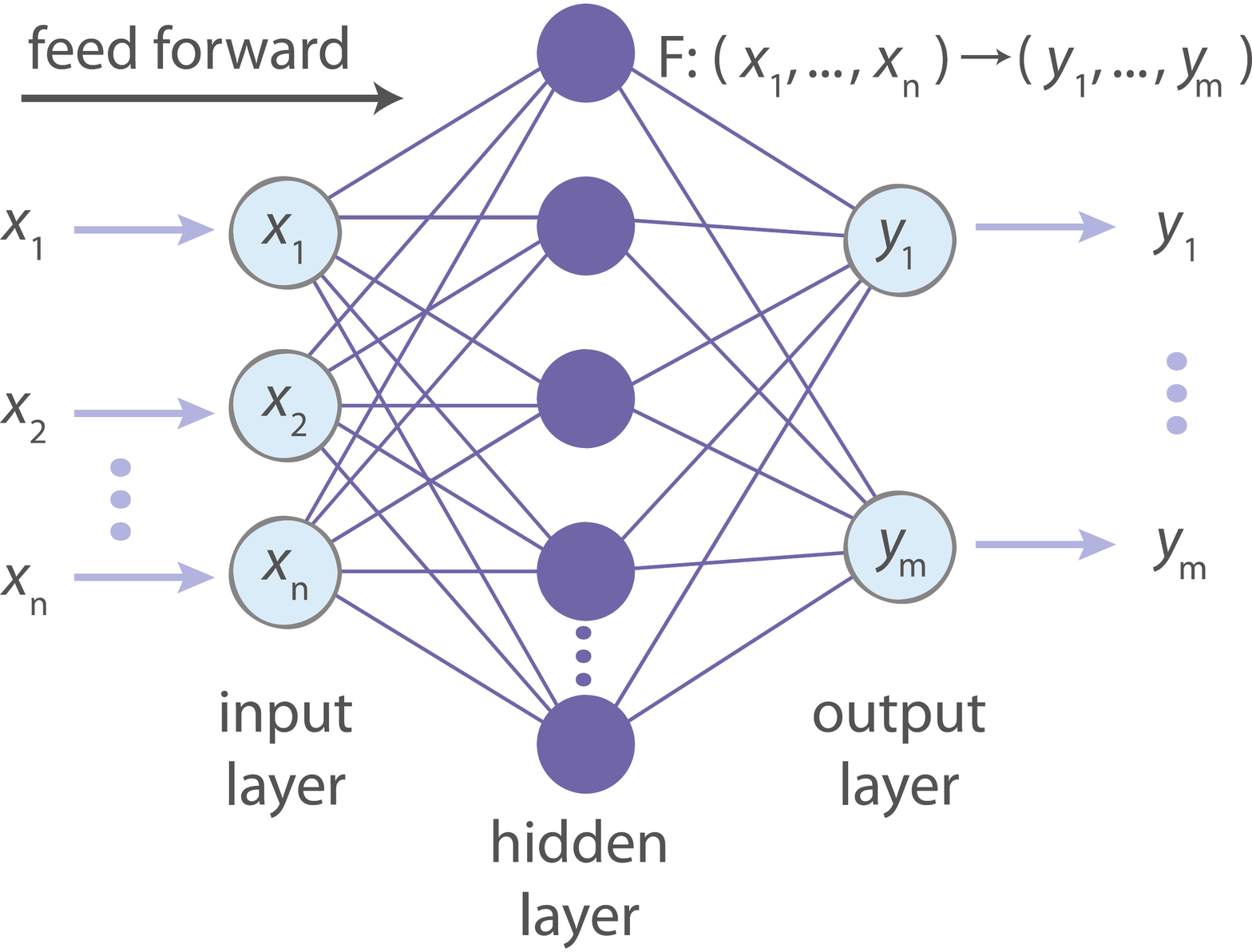}%
}
\caption{{\bf Neural networks. (a)} Single artificial neuron with weights $w_i$, bias $b$ and ELU activation function $\sigma_{\mathrm{ELU}}$. The inputs to the neuron are denoted by $x_1,\dots,x_4$.  {\bf (b)} ELU activation function for $\alpha=1$.   {\bf (c)} Fully connected (feed-forward) neural network with 3 layers. The network as a whole can be thought of as a function mapping the inputs $(x_1,\dots,x_n)$ to the output $(y_1,\dots,y_m)$.}
\end{figure*}

We can consider an autoencoder as a special case of \PN{}, where we ask always the same question and expect the network to reproduce the observation input. Hence, an autoencoder can be described by a triple $(\mathcal{O},\mathcal{Q}=\{0\},\lab:(o,0) \mapsto o)$.
As a corollary of Proposition~\ref{prop:latent_number}, we show that in the case of an autoencoder, the required number of latent variables corresponds to the ``relevant" number of degrees of freedom that describe the observation input. The relevant degrees of freedom, which are called {\it (hidden) generative factors} in this context in representation learning (see for example~\cite{Higgins2017}), may be described by the dimension of the domain of a smooth nondegenerate data generating function $H$, defined as follows.

\begin{definition}
	We say that a smooth function $H:\mathcal{G} \subset \mathbb{R}^d \rightarrow \mathbb{R}^r$ is nondegenerate if there exists an open subset $\mathcal{N}_d \subset \mathcal{G}$ such that the restriction $H|_{\mathcal{N}_d}:\mathcal{N}_d \rightarrow H(\mathcal{N}_d)$ of $H$ on $\mathcal{N}_d$ is a diffeomorphism. 
\end{definition}

One may think of $H$ as sending a small dimensional representation of the data onto a manifold in a high dimensional space of observations.
\begin{corollary}[Minimal representation for an autoencoder]
Let $H:\mathcal{G} \subset \mathbb{R}^d \rightarrow \mathcal{O} \subset \mathbb{R}^r$ be a smooth, nondegenerate and surjective (data generating) function, and let us assume that $\mathcal{G}$ is bounded. Then the minimal sufficient representation 
for data described by a triple $(\mathcal{O},\mathcal{Q}=\{0\},\lab:(o,0)\mapsto o)$ contains $d$ latent variables.
\begin{proof}
First, we show the existence of a (sufficient) representation with $d$ latent variables. We define the encoder mapping (and hence the representation) by $E:o\mapsto \mathrm{argmin}[H^{-1}(\{o\})] \in \mathcal{G}$, where the minimum takes into  account only the first vector entry.\footnote{Note that any element in $H^{-1}(\{o\})$ could be chosen.}  
We set the decoder equal to the smooth map  $H$. By noting that $D(E(o),0)=o$ for all $o \in \mathcal{O}$, this shows that  $d$ latent variables are sufficient.

Let us now show that there  cannot exist a representation with less than $d$ variables. By definition of a nondegenerate function $H$, there exists an open subset $\mathcal{N}_d \subset \mathcal{G}$ in $\mathbb{R}^d$ such that $H|_{\mathcal{N}_d}:\mathcal{N}_d \rightarrow H(\mathcal{N}_d)$ is a diffeomorphism. We define the function $f:o \in H(\mathcal{N}_d) \mapsto \lab(o,0) \in \mathcal{I}$, where $\mathcal{I}=H(\mathcal{N}_d)$. Since $f$ is the identity map and hence a diffeomorphism, the data described by the triple $(\mathcal{O},\mathcal{Q}=\{0\},\lab:(o,0)\mapsto o)$ has dimension at least  $d$. By Proposition~\ref{prop:latent_number}, we conclude that at least  $d$ latent variables are required.
\end{proof}
\end{corollary}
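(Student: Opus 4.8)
The plan is to prove two matching bounds: that $d$ latent variables suffice and that $d$ are necessary, so that the minimal sufficient representation has exactly $d$ latent neurons. The proof should parallel the structure of the proof of Proposition~\ref{prop:latent_number}, using that result as a black box for the lower bound.

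For the \emph{upper bound}, I would exhibit a sufficient representation with $d$ latent neurons directly. Since $H$ is surjective, the fibre $H^{-1}(\{o\})$ is nonempty for every $o \in \mathcal{O}$, so one can define an encoder $E : \mathcal{O} \to \mathcal{G} \subseteq \mathbb{R}^d$ by letting $E(o)$ be any selected point of $H^{-1}(\{o\})$ --- for instance the one minimising the first coordinate, a choice that is well controlled because $\mathcal{G}$ is bounded. Taking the decoder to be $D := H$ with the (trivial) question argument ignored, smoothness of $D$ is immediate from smoothness of $H$, and $D(E(o),0) = H(E(o)) = o = \lab(o,0)$ for all $o \in \mathcal{O}$, so Property~\ref{prop:sufficient} holds with $|R| = d$.

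For the \emph{lower bound}, the idea is to reduce to Proposition~\ref{prop:latent_number} by checking that the triple $(\mathcal{O},\{0\},\lab)$ has dimension at least $d$ in the sense of Definition~\ref{defi:data_dim}. By nondegeneracy of $H$ there is an open subset $\mathcal{N}_d \subseteq \mathcal{G} \subseteq \mathbb{R}^d$ on which $H$ restricts to a diffeomorphism onto $H(\mathcal{N}_d)$; hence $\mathcal{O}_d := H(\mathcal{N}_d)$ is a $d$-dimensional submanifold of $\mathcal{O}$, being the diffeomorphic image of an open subset of $\mathbb{R}^d$. Using the single question $q_1 = 0 \in \mathcal{Q}$, the associated map $f : \mathcal{O}_d \to \mathcal{A}$, $f(o) = \lab(o,0) = o$, is just the inclusion, which is trivially a diffeomorphism onto its image $f(\mathcal{O}_d) = \mathcal{O}_d$. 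This verifies Definition~\ref{defi:data_dim} with $n = d$ and $k = 1$, so Proposition~\ref{prop:latent_number} gives that at least $d$ latent variables are required. Combining the two bounds completes the proof.

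All steps are short; the only places needing care are (i) that the encoder is genuinely well defined on all of $\mathcal{O}$ --- this is exactly where surjectivity of $H$ (and, for the explicit ``argmin'' choice, boundedness of $\mathcal{G}$) is used --- and (ii) that $H(\mathcal{N}_d)$ really qualifies as an embedded submanifold, which is automatic from nondegeneracy. I expect the ``hardest'' part to be simply lining up the hypotheses with Definition~\ref{defi:data_dim} (checking that the trivially-smooth identity map meets its diffeomorphism requirement and that $\mathcal{Q}=\{0\}$ causes no issue), and even that is routine.
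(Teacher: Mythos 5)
Your proposal is correct and follows essentially the same route as the paper's own proof: the same explicit encoder (a selection from the fibre $H^{-1}(\{o\})$, e.g.\ via the first-coordinate argmin) with decoder $D=H$ for the upper bound, and the same reduction to Proposition~\ref{prop:latent_number} via the diffeomorphic image $H(\mathcal{N}_d)$ and the identity map $f(o)=\lab(o,0)=o$ for the lower bound. No substantive differences to report.
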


\section{Neural networks} \label{app:NN}

For a detailed introduction to artificial neural networks and deep learning, see for example~\cite{nielsenneural}. Here we give a very short overview of the basics.

\paragraph{Single artificial neuron.} The building blocks of neural networks are single neurons (Figure~\ref{fig:neuron}). We can think of a neuron as a map that takes several real inputs $x_1,\dots,x_n$ and provides an output $\sigma(\sum_i w_i x_i +b)$, according to an \textit{activation function}  $\sigma:\mathbb{R}\rightarrow\mathbb{R}$,  where the weights $w_i \in \mathbb{R}$ and the bias $b \in \mathbb{R}$ are tunable parameters. The output of the neuron is itself sometimes denoted by \textit{activation}, and there  are different possible choices for the activation function. 
For the implementation of the examples in this paper, we use the exponential linear unit (ELU)~\cite{clevert_fast_2015}, depicted in Figure~\ref{fig:ELU}. The ELU is defined for a parameter $\alpha>0$ as
$$
\sigma_{\mathrm{ELU}}(z)=
\begin{cases}
z & \text{for } z>0 \, ,\\
\alpha \left(e^z-1\right) & \text{for } z\leq 0\,.\\
\end{cases}
$$
\paragraph{Neural network.}  A (feed-forward) neural network is created by arranging neurons in layers and forwarding the outcomes of the neurons in the $i$-th layer to neurons in the $(i+1)$-th layer (see Figure~\ref{fig:feedforward_net}). The network as a whole can be viewed as a function $F:\mathbb{R}^n \rightarrow \mathbb{R}^m$ with $x_1,\dots,x_n$ corresponding to the activations of the neurons in the first layer (which is called \textit{input layer}). The activations of the input layer form the input for the second layer, which is a \textit{hidden layer} (since it is neither an input nor an output layer). In the case of a fully connected network, each neuron in the $(i+1)$-th layer receives the activations of all neurons in the $i$-th layer as input. The activations of the $m$ neurons in the last layer, which is called \textit{output layer}, are then interpreted as the output of the function $F$.
It can be shown that neural networks are  \textit{universal}, in the sense that any continuous function can be approximated arbitrarily well by a feedforward network with just one hidden layer by using sufficiently many hidden neurons. For a mathematical statement of the result, see~\cite{cybenko_approximation_1989, hornik_multilayer_1989}. A visualization is given in~\cite{nielsenneural}.

\paragraph{Training.} The weights and the biases of the neural network are not tuned by hand; instead, they are optimized using training samples, i.e., known input-output-pairs $(x, F^{\star}(x))$ of the function $F^{\star}$ that we would like to approximate. We may think of a neural network as a  class of functions $\{F_\theta\}_\theta$, parametrized by $\theta$, which contains the weights and biases of all the neurons in the network. A cost function $C\left(x, \theta \right)$ measures how close the output $F_\theta(x)$ of the network is to the desired output $F^{\star}(x)$ for an input $x$. For example, a common choice for the cost function is $C\left(x, \theta \right) = \norm{F^{\star}(x) - F_\theta(x)}_2^2$.

The weights and biases of a network are initialized at random~\cite{nielsenneural}. 
To then update the parameters $\theta$, the gradient $\vec{\nabla}_\theta C\left(x, \theta \right)$ is computed and averaged over all training samples $x$. Subsequently, $\theta$ is updated in the negative gradient direction --- hence the name {\textit {gradient descent}}.
In practice, the average of the gradient over all training samples is often replaced by an average over a smaller subset of training samples called a {\it mini-batch}; then, the algorithm is called {\textit stochastic gradient descent}.
The backpropagation algorithm is used to perform a gradient descent step efficiently (see~\cite{nielsenneural} for details).

\subsection{Variational autoencoders} \label{app:VAE}

\begin{figure*}[t]  
    \centering
    \includegraphics[width=.6\textwidth]{./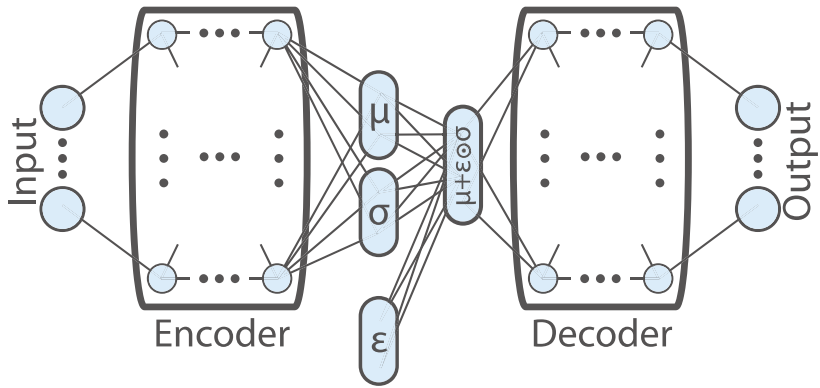}
   \caption{Network structure for a variational autoencoder.  The encoder and decoder are described by conditional probability distributions $p(z | x)$ and $p(x | z)$ respectively. The output distribution of the encoder are the parameters $\mu_i$ and $\log(\sigma_i)$ for independent Gaussian distributions $z_i \sim \Normal(\mu_i,\sigma_i)$ of the latent variables. The reparameterization trick is used to sample from the latent distribution. }
    \label{fig:VAE}
\end{figure*}

The implementation of \PN{} uses a modified version of so-called variational autoencoders (VAEs)~\cite{Kingma2013,Higgins2017}. The standard VAE architecture does not include the question input used by \PN{} and tries to reconstruct the input from the representation instead of answering a question. VAEs are one particular architecture used in the field of representation learning~\cite{Bengio2012}. Here, we give a short overview over the goals of representation learning and the details of VAEs.

\paragraph{Representation learning.} The goal in representation learning is to map a high-dimensional input vector $x$ to a lower-dimensional representation $z=(z_1,  z_2,\dots, z_d)$, commonly called the {\it latent vector}.\footnote{The variables $x$ and $z$ correspond to the observation $o$ and the representation $r$ used in the main text.}
The representation $z$ should still contain all the relevant information about $x$.
In the case of an autoencoder, $z$ is used to reconstruct the input $x$. This is motivated by the idea that the better the (low-dimensional) representation is, the better the original data can be recovered from it.
Specifically, an autoencoder uses a neural network (\emph{encoder}) to map the input $x$ to a small number of latent neurons $z$. Then, another neural network (\emph{decoder}) is used to reconstruct an estimate of the input, that is $z \mapsto \tilde x$. During training, the encoder and decoder are optimized to maximize the reconstruction accuracy and reach $\tilde x \approx x$. 

\paragraph{Probabilistic encoder and decoder.} Instead of considering deterministic maps $x \mapsto z$ and $z \mapsto \tilde x$, we generalize to conditional probability distributions $p(z | x)$ for the encoder and $p(\tilde x | z)$ for the decoder.
This is motivated by the Bayesian view that the most informative statement the encoder can output a description of a probability distribution over all latent vectors, instead of outputting a single estimate. The same reasoning holds  for the decoder. We use the notation $z \sim p(z)$ to indicate that $z$ is picked at random according to the distribution $p$.

We cannot treat the general case analytically, so we make restricting assumptions to simplify the setting. 
First we assume that the input can be perfectly compressed and reconstructed by an encoder and decoder which are both neural networks, that is 
we assume that the ideal distributions $p(z | x)$ and $p(\tilde x | z)$ that reach $x = \tilde x$ are members of parametric families $\{p_\phi(z | x)\}_\phi$ and $\{p_\theta(\tilde x | z)\}_\theta$, respectively. 
We further assume that it is possible to achieve this with a latent representation where each neuron is independent of the others, $p_\phi(z | x) = \prod_i p_\phi(z_i|x)$.
If these distributions turn out hard to find for a given dimension $d$ of the latent representation, we can try to increase the number of neurons of the representation to disentangle them. 
Finally, we make one more simplifying assumption, which is justified \textit{a posteriori} by good results: that we can reach a good approximation of  $p(z | x)$ by using only independent normal distributions for each latent neuron, $p_\phi(z_i|x) = \Normal(\mu_i, \sigma_i)$, where $\mu_i$ is the mean and $\sigma_i$ the variance. We can think of the encoder as mapping $x$ to the vectors $\mu = (\mu_1, \dots, \mu_d)$ and $\sigma= (\sigma_1, \dots, \sigma_d)$. 

The optimal settings for $\phi$ and $\theta$ are then learned as follows, see Figure~\ref{fig:VAE}:
\begin{enumerate}
\item The encoder with parameters (weights and biases) $\phi$ maps an input $x$ to $p_\phi(z | x) = \Normal[(\mu_1, \dots, \mu_d), (\sigma_1, \dots, \sigma_d)]$. 
\item A latent vector $z$ is sampled from $p_\phi(z | x)$.
\item The decoder with parameters (weights and biases) $\theta$ maps the latent vector $z$ to $p_\theta(\tilde x | z)$.
\item The parameters $\phi$ and $\theta$ are updated to maximize the likelihood of the original input $x$ under the decoder distribution $p_\theta(\tilde x | z)$.
\end{enumerate}

\paragraph{Reparameterization trick.} The operation that samples a latent vector $z$ from $p_\phi(z | x)$ is not differentiable with respect to the parameters $\phi$ and $\theta$ of the network. However, differentiability is necessary to train the network using stochastic gradient descent. This issue is solved by the reparameterization trick introduced in~\cite{Kingma2013}: if $p_\phi(z_i | x)$ is a Gaussian with mean $\mu_i$ and standard deviation $\sigma_i$, we can replace the sampling operation using an auxiliary random number $\epsilon_i \sim \Normal(0, 1)$. Then, a sample of the latent variable $z_i \sim \Normal(\mu_i, \sigma_i)$ can be generated by $z_i=\mu_i + \sigma_i \epsilon_i$. Sampling $\epsilon_i$ does not interfere with  the gradient descent because $\epsilon_i$ is independent of the trainable parameters $\phi$ and $\theta$. Alternatively, one can view this way of sampling as injecting noise into the latent layer~\cite{Achille2018}.

\paragraph{$\beta$-VAE cost function.}  A computationally tractable cost function for optimizing the parameters $\phi$ and $\theta$ was derived in~\cite{Kingma2013}. This cost function was extended in~\cite{Higgins2017} to encourage independency of the latent variables $z_1, \dots, z_d$ (or to encourage  ``disentangled'' representations, in the language of representation learning). The cost function in~\cite{Higgins2017} is known as the $\beta$-VAE cost function, 
\begin{align*}
	C_{\beta}(x)=  & -\Big[ \E_{z \sim p_{\phi}(z|x)} \ \log p_\theta( x | z)\  \Big] + \beta \  \DKL\left[p_\phi(z|x)\| h(z)\right] \, ,
\end{align*}
where the distribution $h(z)$ is a prior over the latent variables, typically chosen as the unit Gaussian\footnote{The interpretation of $h(z)$ as a prior is clear only when deriving VAEs as generative networks. For details, see~\cite{Kingma2013}.}, $\beta\geq0$ is a constant, and $\DKL$ is the Kullback-Leibler (KL) divergence, which is a quasi-distance\footnote{The KL divergence satisfies all axioms of a metric apart from symmetry.} measure between probability distributions, 	
\begin{align*}
		\DKL\left[p(z) \| q(z) \right] = \sum_z p(z)\log\left(\frac{p(z)}{q(z)}\right).
\end{align*}

Let us give an intuition for the motivation behind the $\beta$-VAE  cost function.
The first term is a log-likelihood factor, which encourages the network to recover the input data with high accuracy.  It asks ``for each $z$ , how likely are we to recover the original $x$ after the decoding?'' and takes the expectation of the logarithm of this likelihood $p_\theta(x|z)$ (other figures of merit could be used here in an alternative to the logarithm) over $z$ sampled from $p_\phi(z|x)$, in order to simulate the encoding.  In practice, this  expectation is often estimated with a single sample, which works well enough if the mini-batches are chosen sufficiently large~\cite{Kingma2013}. 

\begin{figure*}[t] 
\centering
\begin{examplebox}[label=box:pendulum]{Time evolution of a damped pendulum \hypersetup{colorlinks=true,linkcolor=white} (Section~\ref{sec:pendulum})}\hypersetup{colorlinks=true,linkcolor=black} 
\begin{description}
\item[Problem:]  Predict the position of a one-dimensional damped pendulum at different times.
 \item[Physical model:]
 Equation~of~motion: $ m\ddot{x} = -\kappa x-b\dot{x}\, .$ \\
 \hspace*{2.1cm} Solution: $x(t)=A_0 e^{-\frac{b}{2m}t} \cos(\omega t+\delta_0)$, with $\omega=\sqrt{\frac{\kappa}{m}}\sqrt{1-\frac{b^2}{4m\kappa}}\,.$
   \item[Observation:] Time series of positions: $o= \big[x(t_i)\big]_{i \in \{1,\dots,50\}} \in\mathbb{R}^{50}$, with equally spaced $t_i \in [0,5]s$. Mass $m=1$kg, amplitude $A_0=1$m and phase $\delta_0=0\,$ are fixed; spring constant $\kappa \in [5,10] \textnormal{ kg/s}^2$ and damping factor $b \in [0.5,1] \textnormal{ kg/s}$ are varied between training samples.
   	\item[Question:] Prediction times: $q=t_{\textnormal{pred}} \in [0,10]s$.
   \item[Correct answer:]
    Position at time $t_{\textnormal{pred}}$: $\lab=x(t_{\textnormal{pred}}) \in \mathbb{R}\,$.
   \item[Implementation:] Network depicted in Figure~\ref{fig:scinet_general} with three latent neurons (see Table~\ref{Table:training_details} and Table~\ref{Table:network_spec} for the details).
      \item[Key findings:] 
       \begin{itemize} \item[]
   \item \PN{} predicts the positions $x(t_{\textnormal{pred}})$ with a root mean square error below $2\%$ (with respect to the amplitude $A_0=1$m) (Figure~\ref{fig:pendulum_pred}).
  \item \PN{} stores $\kappa$ and $b$ in two of the latent neurons, and does not store any information in the third latent neuron (Figure~\ref{fig:pendulum_state_neurons}).
     \end{itemize}
    \end{description}
\end{examplebox}
\end{figure*}

The second term encourages disentangled representations, and we can motivate it using standard properties of the KL divergence. Our goal is to minimize the amount of correlations between the latent variables $z_i$: 
we can do this by minimizing the distance $\DKL\left[p(z) \| \prod_i p(z_i)\right] $ between $p(z)$ and the product of its marginals. 
For any other distribution with independent $z_i$,  $h(z)  = \prod_i h(z_i)$, the KL divergence satisfies
\begin{align}
	\DKL\left[p(z) \| \prod_i p(z_i)\right] &\leq \DKL\left[p(z) \| h(z)\right] \nonumber.
\end{align}
The KL divergence is furthermore jointly convex in its arguments, which implies
\begin{align*}
	\DKL& \left[\sum_x p(x)\  p_\theta(z|x) \| h(z) \right] \\ & \leq \sum_x p(x) \ \DKL\left[p_\theta(z|x) \| h(z) \right].
\end{align*}
Combining this with the previous inequality, we obtain
	\begin{align*}
		\DKL &\left[p(z) \  \| \prod_i p(z_i) \right]  \\ &\leq \E_{x\sim p(x)} \  \DKL\left[p(z|x) \| h(z)\right]. 
\end{align*}
The term on the right hand side corresponds exactly to the second term in the cost function, since in the training we try to minimize $\E_{x\sim p(x)} C_\beta (x)$. 
Choosing a large parameter $\beta$ also penalizes the size of latent representation $z$, motivating the network to learn an efficient representation.  
For an empirical test of the effect of large $\beta$ see~\cite{Higgins2017}, and for another theoretical justification using the information bottleneck approach see~\cite{Achille2018}.

\begin{figure*}
\centering
 \subfloat[ \label{fig:pendulum_pred}]{%
  \includegraphics[width=.4\linewidth]{./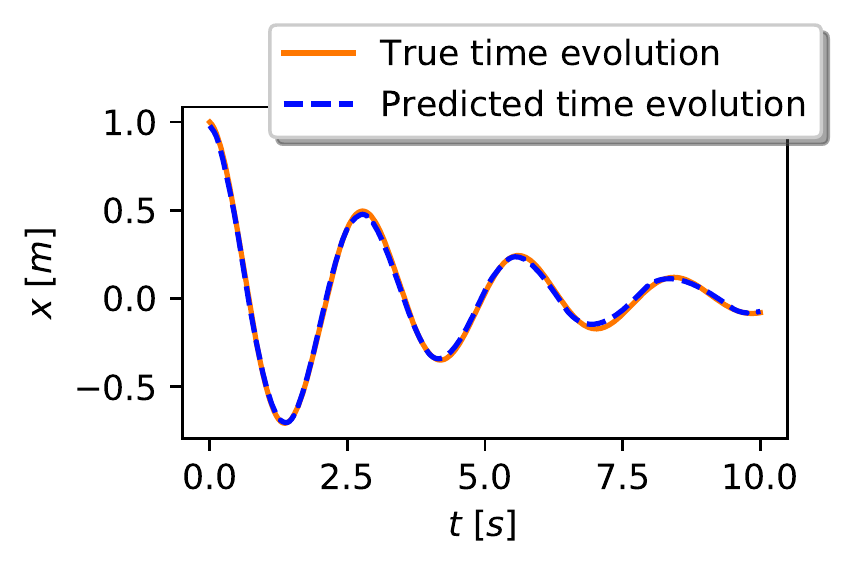}%
}\hfill
\subfloat[ \label{fig:pendulum_state_neurons}]{%
  \includegraphics[width=.4\linewidth]{./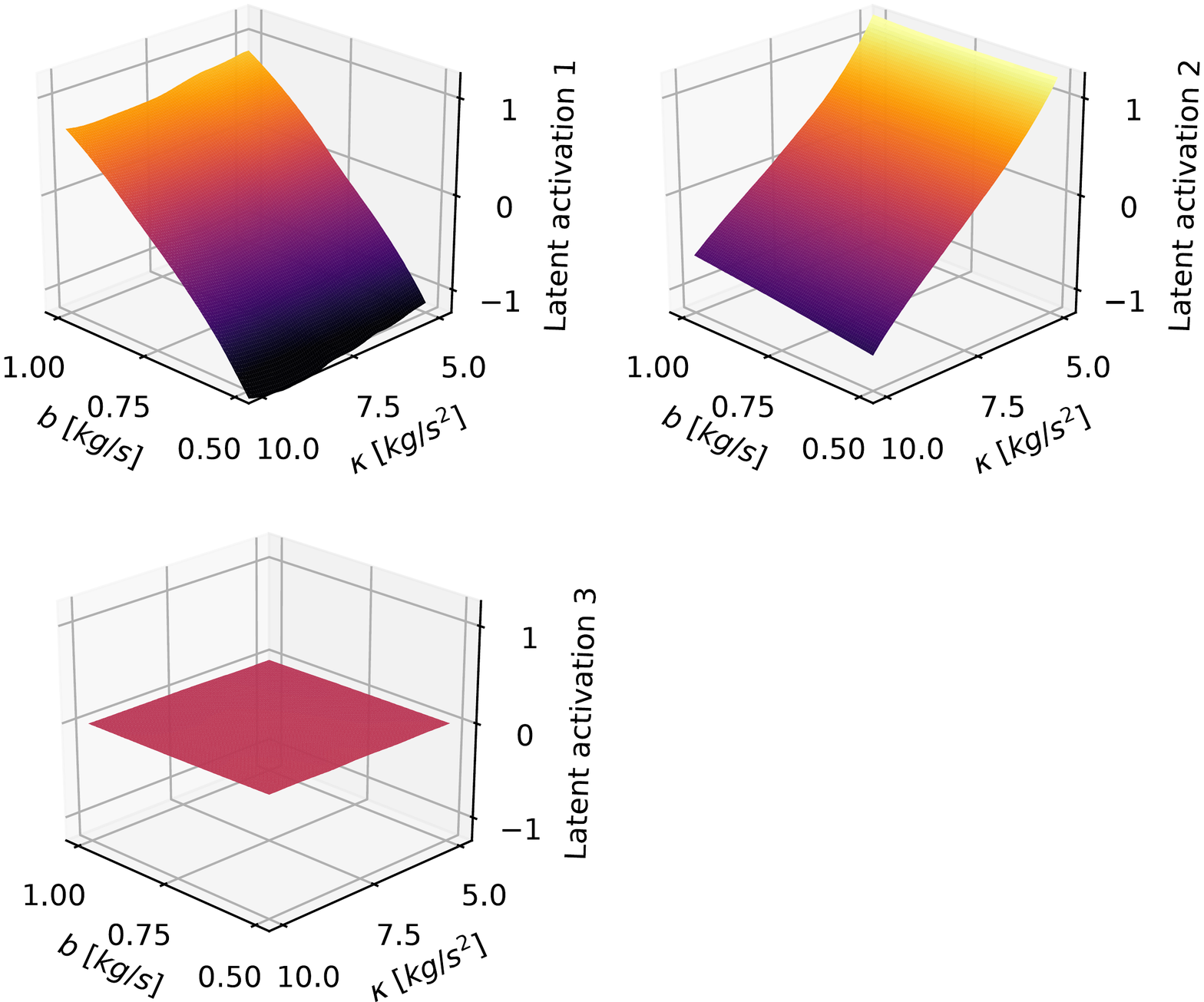}%
}
\caption{{\bf Damped pendulum}.  \PN{} is fed a time series of the trajectory of a damped pendulum. It learns to store the two relevant physical parameters, frequency and damping, in the representation, and makes correct predictions about the pendulum's future position.  {\bf (a) Trajectory prediction of \PN{}.} Here, the spring constant is $\kappa=5$kg/s$^2$ and the damping factor is $b=0.5$kg/s. \PN{}{\it's} prediction is in excellent agreement with the true time evolution. {\bf (b) Representation learned by \PN{}.} The plots show the activations of the three latent neurons of \PN{} as a function of the spring constant $\kappa$ and the damping factor $b$.
The first two neurons store the damping factor and spring constant, respectively.
The activation of the third neuron is close to zero, suggesting that only two physical variables are required. 
On an abstract level, learning that one activation can be set to a constant is encouraged by searching for uncorrelated latent variables, i.e., by minimizing the common information of the latent neurons during training.}
 \label{fig:damped_pendulum}
\end{figure*}

To derive  an explicit form of $C_\beta$ for a simple case, we again assume that $p_\phi(z|x) = \Normal(\mu, \sigma)$. In addition, we assume that the decoder output $p_\theta(\tilde x | z)$ is a multivariate Gaussian with mean $\hat{x}$ and fixed covariance matrix $\hat{\sigma} = \frac{1}{\sqrt{2}} \id$.
With these assumptions, the $\beta$-VAE cost function can be explicitly written as
\begin{align} \label{cost_function_betaVAE}
    C_\beta(x) = \lVert \hat{x} - x \rVert ^ 2_2 - \frac{\beta}{2}\left(\sum_{i} \log(\sigma_i^2)-\mu_i^2-\sigma_i^2\right) + C \, .
\end{align} 
The constant terms $C$ do not contribute to the gradients used for training and can therefore be ignored.

\section{Details about the physical examples \label{app:sec:examples}} 
In the following, we give some more information about the four examples of physical systems to which we applied \PN{} and which were mentioned in the main text.

\subsection{Damped pendulum} \label{sec:pendulum}

We consider a simple example from classical physics, the damped pendulum, described in Box~\ref{box:pendulum}.
The time evolution of the system is given by the differential equation  $-\kappa x-b\dot{x}=m\ddot{x}$, where $\kappa$ is the \textit{spring constant}, which determines the frequency of the oscillation, and $b$ is the \textit{damping factor}. We keep the mass $m$ constant (it is a scaling factor that could be absorbed by defining $\kappa\mathrlap{'}=\kappa/m$ and $b\mathrlap{'}=b/m$),
 such that $\kappa$ and $b$ are the only variable parameters. We consider the case of weak damping here, where the solution to the equation of motion is given in Box~\ref{box:pendulum}.

We choose a network structure for \PN{} with $3$ latent neurons. As an input, we provide a time series of positions of the pendulum and we ask \PN{} to predict the position at a future time (see Box~\ref{box:pendulum} for details).
The accuracy of the predictions given by \PN{} after training is illustrated in Figure~\ref{fig:pendulum_pred}.

Without being given any physical concepts, \PN{} learns to extract the two relevant physical parameters from (simulated) time series data for the $x$-coordinate of the pendulum and to store them in the latent representation. 
As shown in Figure~\ref{fig:pendulum_state_neurons}, the first latent neuron depends nearly linearly on $b$ and is almost independent of $\kappa$, and the second latent neuron depends only on $\kappa$, again almost linearly. Hence, \PN{} has recovered the same time-independent parameters $b$ and $\kappa$ that are used by physicists. The third latent neuron is nearly constant and does not provide any additional information --- in other words, \PN{} recognized that two parameters suffice to encode this situation.

\subsection{Conservation of angular momentum} \label{sec:cons_law}


One of the most important concepts in physics is that of conservation laws, such as conservation of energy and angular momentum.
While their relation to symmetries makes them interesting to physicists in their own right, conservation laws are also of practical importance.
If two systems interact in a complex way, we can use conservation laws to predict the behaviour of one system from the behaviour of the other, without studying the details of their interaction.
For certain types of questions, conserved quantities therefore act as a compressed representation of joint properties of several systems.

We consider the scattering experiment shown in Figure~\ref{fig:angular_momentum_cons} and described in Box~\ref{box:conservation}, where two point-like particles collide. 
Given the initial angular momentum of the two particles and the final trajectory of one of them, a physicist can predict the trajectory of the other using conservation of total angular momentum.

To see whether \PN{} makes use of angular momentum conservation in the same way as a physicist would do, we train it with (simulated)
experimental data as described in Box~\ref{box:conservation}, and add  Gaussian noise to show that the encoding and decoding are robust.
Indeed, \PN{} does exactly what a physicist would do and stores the total angular momentum in the latent representation (Figure~\ref{fig:ang_mom_cons_plots}).
This example shows that \PN{} can recover  conservation laws, and suggests that they emerge  naturally from compressing data and asking questions about joint properties of several systems.\\

\begin{figure*}[ht!] 
\centering
\begin{examplebox}[label=box:conservation]{Two-body collision with angular momentum conservation\hypersetup{colorlinks=true,linkcolor=white} (Section~\ref{sec:cons_law})}\hypersetup{colorlinks=true,linkcolor=black}
\begin{description}
\item[\textbf{Problem}:] Predict the position of a particle fixed on a rod of radius $r$ (rotating about the origin) after a collision at the point $(0,r)$ with a free particle (in two dimensions, see Figure \ref{fig:ang_mom_setup}).
\item[\textbf{Physical model}:] Given the total angular momentum before the collision and the velocity of the free particle after the collision, the position of the rotating particle at time $\toutPred$ (after the collision) can be calculated from angular momentum conservation: 
$J=m_{\rm rot} r^2 \omega-r m_{\rm free} ({\bf v}_{\rm free})_x=m_{\rm rot} r^2 \omega\mathrlap{'}-r m_{\rm free} ({\bf v}\mathrlap{'}_{\rm free})_x=J\mathrlap{'}$. 
  \item[Observation:]\hfill \\Time series of both particles before the collision: 
$o= [\left(\tin_i^{\textnormal{rot}},{\bf q}_{\rm rot}(\tin_i^{\textnormal{rot}})\right),\left(\tin_i^{\textnormal{free}},{\bf q}_{\rm free} (\tin_i^{\textnormal{free}})\right) ]_{i \in \{1,\dots,5 \}}$, with times $\tin_i^{\textnormal{rot}}$ and $\tin_i^{\textnormal{free}}$ randomly chosen for each training sample.
Masses $m_{\rm rot}=m_{\rm free}=1$kg and the orbital radius $r=1$m are fixed; 
initial angular velocity $\omega$, initial velocity $\mathbf v_{\rm free}$, the first component of ${\bf q}_{\rm free}(0)$ and final velocity $\mathbf v'_{\rm free}$ are varied between training samples.
Gaussian noise ($\mu = 0, \sigma = 0.01$m) is added to all position inputs.

\item[Question:] Prediction time and position of free particle after collision: $q=\left(\toutPred, \left[\tout_i,{\bf q}\mathrlap{'}_{\rm free}(\tout_i)\right]_{i \in \{ 1,\dots,5\}} \right)\, .$
	  
\item[Correct answer:] Position of rotating particle at time $\toutPred$: $\lab={\bf q}\mathrlap{'}_{\rm rot}(\toutPred)\, .$
 \item[Implementation:] Network depicted in Figure~\ref{fig:scinet_general} with one latent neuron (see Table~\ref{Table:training_details} and Table~\ref{Table:network_spec} for the details).
   \item[Key findings:] 
      \begin{itemize} \item[]
 \item \PN{} predicts the position of the rotating particle with root mean square prediction error below 4\% (with respect to the radius $r=1$m).
  \item \PN{} is resistant to noise.
  \item \PN{} stores the total angular momentum in the latent neuron.
    \end{itemize}
    \end{description}
\end{examplebox}
\end{figure*}

\subsection{Representation of qubits} \label{sec:rep_qubit}
 \textit{Quantum state tomography} is an active area of research~\cite{paris_quantum_2004}.
Ideally, we look for a \emph{faithful representation} of the state of a quantum system, such as the wave function: a representation that stores all information necessary to predict the probabilities of the outcomes for arbitrary measurements on that system.
However, to specify a faithful representation of a quantum system it is not necessary to perform all theoretically possible measurements on the system. If a set of measurements is sufficient to reconstruct the full quantum state, such a set is called \textit{tomographically complete}.

Here we show that, based only on (simulated) experimental data and without being given any assumptions about quantum theory, \PN{} recovers a faithful representation of the state of small quantum systems and can make accurate predictions.
In particular, this allows us to infer the dimension of the system and distinguish tomographically complete from incomplete measurement sets.
Box~\ref{box:qubit} summarizes the setting and the results.


\begin{figure*}[t] 
\centering
 \subfloat[ \label{fig:ang_mom_setup}]{%
  \includegraphics[width=.6\linewidth]{./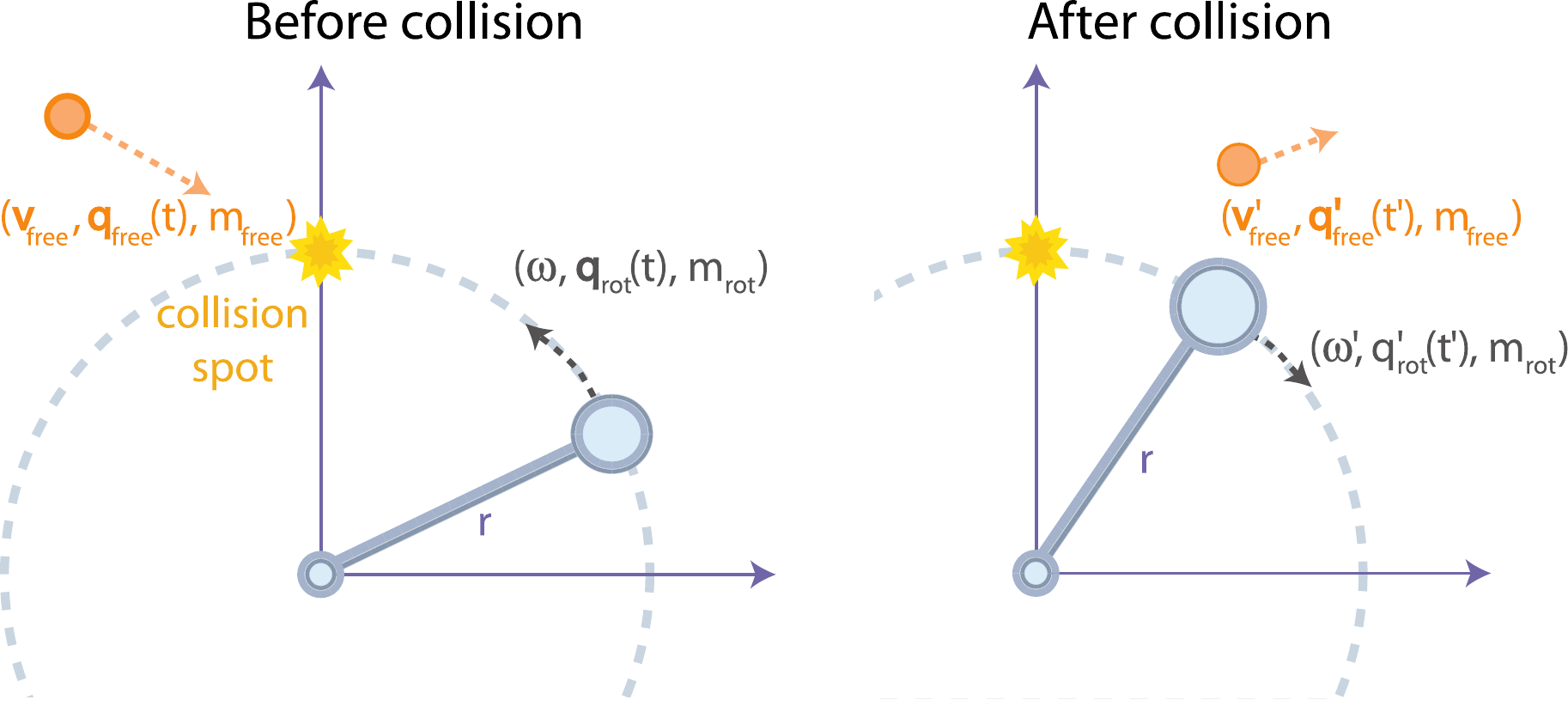}%
}\hfill
\subfloat[ \label{fig:ang_mom_cons_plots}]{%
  \includegraphics[width=.3\linewidth]{./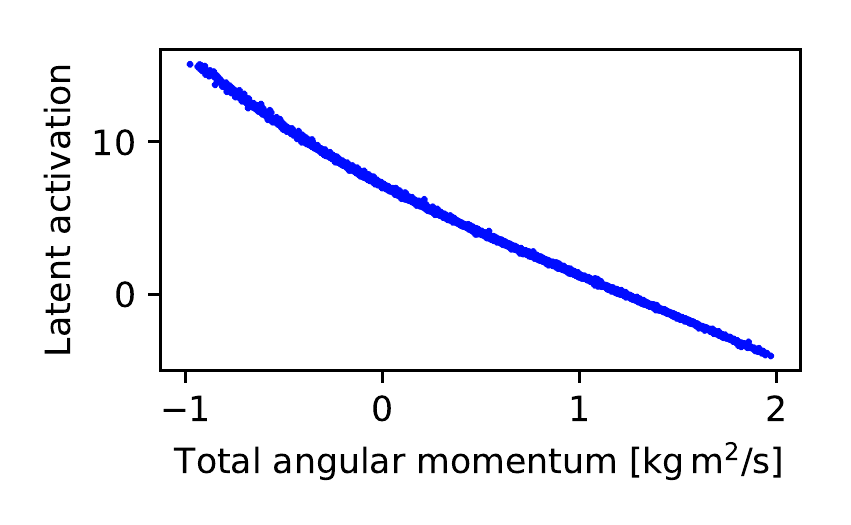}%
}
\caption{{\bf Collision under conservation of angular momentum.} In a classical mechanics scenario where the total angular momentum is conserved, the neural network learns to store this quantity in the latent representation. {\bf (a) Physical setting.} A body of mass $m_{\rm rot}$ is fixed on a rod of length $r$ (and of negligible mass) and rotates around the origin with angular velocity $\omega$.
A free particle with velocity ${\bf v}_{\rm free}$ and mass $m_{\rm free}$ collides with the rotating body at position ${\bf q} =(0,r)$.
After the collision, the angular velocity of the rotating particle is $\omega\mathrlap{'}$ and the free particle is deflected with velocity ${\bf v}\mathrlap{'}_{\rm free}$. {\bf (b) Representation learned by \PN{}.} Activation of the latent neuron as a function of the total angular momentum. \PN{} learns to store the total angular momentum, a conserved quantity of the system.}
 \label{fig:angular_momentum_cons}
\end{figure*}

\begin{figure*}[ht!]
\begin{examplebox}[label=box:qubit]{Representation of pure one- and two-qubit states \hypersetup{colorlinks=true,linkcolor=white} (Section~\ref{sec:rep_qubit})}\hypersetup{colorlinks=true,linkcolor=black}
\begin{description}
	\item[\textbf{Problem}:] Predict the measurement probabilities for any  binary projective measurement $\omega \in \mathbb{C}^{2^n}$ on a pure $n$-qubit state $\psi  \in \mathbb{C}^{2^n}$ for $n=1,2$.
 \item[\textbf{Physical model}:]  The probability $p(\omega,\psi)$ to measure 0 on the state $\psi \in \mathbb{C}^{2^n}$ performing the measurement $\omega \in \mathbb{C}^{2^n}$ is given by $p(\omega,\psi)=|\braket{\omega}{\psi}|^2\, .$
  \item[Observation:] Operational parameterization of a state $\psi$: $o= \left[p(\alpha_{i},\psi)\right]_{i \in \{1,\dots,n_1\}}$ for a fixed set of random binary projective measurements $\mathcal{M}_1:=\left\{\alpha_1,\dots,\alpha_{n_1}\right\}$ ($n_1 = 10$ for one qubit, $n_1 = 30$ for two qubits).
  \item[Question:] Operational$^{\textnormal{\ref{footnote:para}}}$ parameterization of a measurement $\omega$: $q=\left[ p(\beta_{i},\omega)\right]_{i \in \{1,\dots,n_2 \}}$ for a fixed set of random binary projective measurements $\mathcal{M}_2:=\left\{ \beta_{1},\dots,\beta_{n_2} \right\}$ ($n_2 = 10$ for one qubit, $n_2 = 30$ for two qubits). 
  \item[Correct answer:] $\lab(\omega,\psi)=p(\omega,\psi)=|\braket{\omega}{\psi}|^2$.
   \item[Implementation:] Network depicted in Figure~\ref{fig:scinet_general} with varying numbers of latent neurons  (see Table~\ref{Table:training_details} and Table~\ref{Table:network_spec} for the details).
    \item[Key findings:] 
      \begin{itemize} \item[]
 \item \PN{} can be used to determine the minimal number of parameters necessary to describe the state $\psi$ (see Figure~\ref{fig:qubits_tomography_detailed}) without being provided with any prior knowledge about quantum physics.
  \item \PN{} distinguishes tomographically complete and incomplete sets of measurements (see Figure~\ref{fig:qubits_tomography_detailed}).
    \end{itemize}
    \end{description}
\end{examplebox}
\end{figure*}

\begin{figure*}[ht!] 
\centering
  \includegraphics[width=0.8\textwidth]{./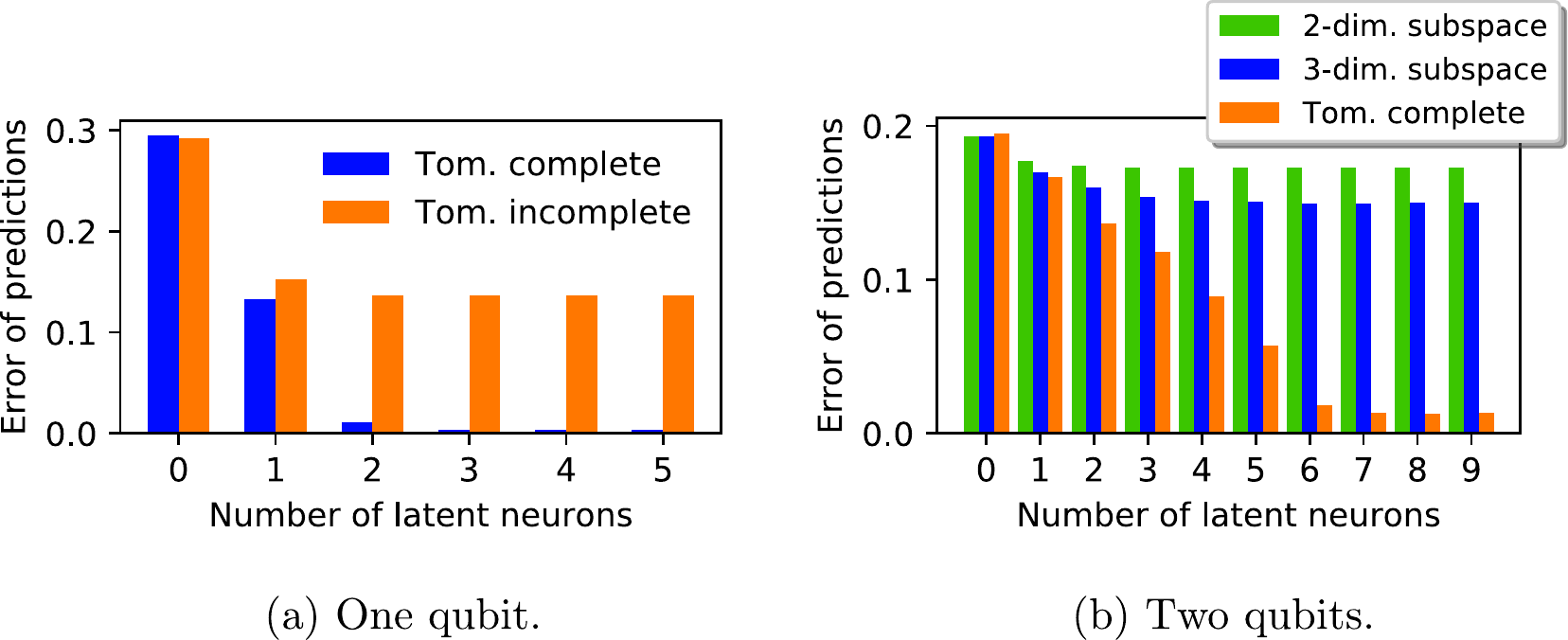}
\caption{{\bf Quantum tomography (more detailed plots than in the main text).}  \PN{} is given tomographic data for one or two qubits and an operational description of a measurement as a question input and has to predict the probabilities of outcomes for this measurement. We train \PN{} with both tomographically complete and incomplete sets of measurements, and find that, given tomographically complete data, \PN{} can be used to find the minimal number of parameters needed to describe a quantum state (two parameters for one qubit and six parameters for two qubits). Tomographically incomplete data can be recognized, since \PN{} cannot achieve perfect prediction accuracy in this case, and the prediction accuracy can serve as an estimate for the amount of information provided by the tomographically incomplete set. The plots show the root mean square error of \PN{}\textit{'s} measurement predictions for test data as a function of the number of latent neurons.}
\label{fig:qubits_tomography_detailed}
\end{figure*}

\begin{figure*}[ht!]
\begin{examplebox}[label=boxcopernicus]{Heliocentric model of the solar system\hypersetup{colorlinks=true,linkcolor=white} (Section~\ref{sec:copernicus})}\hypersetup{colorlinks=true,linkcolor=black}
\begin{description}
\item[\textbf{Problem}:] Predict the angles $\thetaM(t)$ and $\thetaS(t)$ of Mars and the Sun as seen from Earth, given initial states $\thetaM(t_0)$ and $\thetaS(t_0)$.
   \item[\textbf{Physical model}:] Earth and Mars orbit the Sun with constant angular velocity on (approximately) circular orbits.
  \item[Observation:] Initial angles of Mars and the Sun as seen from Earth: $o= \left(\thetaM(t_0),\thetaS(t_0)\right)$, randomly chosen from a set of weekly (simulated) observations within Copernicus' lifetime (3665 observations in total).
	\item[Question:] Implicit.   
  \item[Correct answer:] Time series $\big[a(t_1),\dots,a(t_n)\big]= \big[(\thetaM(t_1),\thetaS(t_1)), \dots, (\thetaM(t_n),\thetaS(t_n))\big]$ of $n=20$ (later in training: $n=50$) observations, with time steps $t_{i+1}-t_i$ of one week.
    \item[Implementation:]  Network depicted in Figure~\ref{fig:NN_structure_for_dynamic_variables} with two latent neurons and allowing for time updates of the form $r(t_{i+1}) = r(t_i) + b$  (see Table~\ref{Table:training_details} and Table~\ref{Table:network_spec} for the details).  
      \item[Key findings:] 
      \begin{itemize}\item[]
 \item \PN{} predicts the angles of Mars and the Sun with a root mean square error below 0.4\% (with respect to $2\pi$).
  \item \PN{} stores the angles $\phiE$ and $\phiM$ of the Earth and Mars as seen from the Sun in the two latent neurons (see Figure~\ref{fig:copernicus_phi_plot}). 
    \end{itemize}
    \end{description}
\end{examplebox}
\end{figure*}

A (pure) state on $n$ qubits can be  represented by a normalized complex vector $\psi\in \mathbb{C}^{2^n}$, where two states $\psi$ and $\psi'$ are identified if  and only if they differ by a global phase factor, i.e., if there exists $\phi \in \mathbb{R}$ such that  $\psi= e^{i \phi}\psi'$.
The normalization condition and irrelevance of the global phase factor decrease the number of free parameters of a quantum state by two.
Since a complex number has two real parameters, a single-qubit state is described by $2 \times 2^1-2=2$ real parameters, and a state of two qubits is described by $2 \times 2^2-2=6$ real parameters.

Here, we consider binary projective measurements on $n$ qubits. Like states, these measurements can be described by vectors $\omega \in \mathbb{C}^{2^n}$, with measurement outcomes labeled by $0$ for the projection on $\omega$ and $1$ otherwise.
The probability to get outcome $0$ when measuring  $\omega$ on a quantum system in state $\psi$ is then given by $p(\omega,\psi)=|\braket{\omega}{\psi}|^2$, where $\braket{\cdot}{\cdot}$ denotes the standard scalar product on $\mathbb{C}^{2^n}$.

To generate the training data for \PN{}, we assume that we have one or two qubits in a lab that can be prepared in arbitrary states and we have the ability to perform binary projective measurements in a set $\mathcal{M}$.
We choose $n_1$ measurements $\mathcal{M}_1:=\left\{\alpha_1,\dots,\alpha_{n_1}\right\} \subset\mathcal{ M}$ randomly, which we would like to use to determine the state of the quantum system. We perform all measurements in $\mathcal{M}_1$ several times on the same quantum state $\psi$ to estimate the probabilities $p(\alpha_i,\psi)$ of measuring $0$ for the $i$-th measurement.
These probabilities form the observation given to \PN{}.

To parameterize the measurement $\omega$, whose outcome probabilities should be predicted by \PN{}, we  choose another random set of measurements $\mathcal{M}_2:=\left\{ \beta_{1},\dots,\beta_{n_2} \right\} \subset \mathcal{M}$.
The probabilities $p(\beta_i,\omega)$ are provided to \PN{} as the question input. 
We always assume that we have chosen enough measurements in $\mathcal{M}_2$ such that they can distinguish all the possible measurements $\omega \in \mathcal{M}$, i.e., we assume that $\mathcal{M}_2$ is tomographically complete.\footnote{\label{footnote:para}This parameterization of a measurement $\omega$ assumes that we know the equivalence between binary projective measurements and states.
However, this is not a fundamental assumption, since we could parameterize the set of possible measurements by any parameterization that is natural for the experimental setup, for example the settings of the dials and buttons on an experimental apparatus.
Such a natural parameterization is assumed to fully specify the measurement, in the sense that the same settings on the experimental apparatus will always result in the same measurement being performed.
Because $\mathcal{M}_2$ only represents our choice for parameterizing the measurement setup, it is natural to assume that $\mathcal{M}_2$ is tomographically complete.} \PN{} then has to predict the probability $p(\omega,\psi)$ for measuring the outcome 0 on the state $\psi$ when performing the measurement $\omega$.

We train \PN{} with different pairs $(\omega, \psi)$ for one and two qubits, keeping the measurement sets $\mathcal{M}_1$ and $\mathcal{M}_2$ fixed. We choose $n_1 = n_2 = 10$ for the single-qubit case and $n_1 = n_2 = 30$ for the two-qubit case.
The results are shown in Figure~\ref{fig:qubits_tomography_detailed}.

Varying the number of latent neurons, we can observe how the quality of the predictions improves as we allow for more parameters in the representation of $\psi$.
To minimize statistical fluctuations due to the randomized initialization of the network, each network specification is trained three times and the run with the lowest  mean square prediction error on the test data is used.

For the cases where $\mathcal{M}_1$ is tomographically complete, the plots in Figure~\ref{fig:qubits_tomography_detailed} show a drop in prediction error when the number of latent neurons is increased up to two or six for the cases of one and two qubits, respectively.\footnote{In the case of a single qubit, there is an additional small improvement in going from two to three latent neurons: 
this is a technical issue caused by the fact that any two-parameter representation of a single qubit, for example the Bloch sphere representation, includes a \emph{cyclic parameter}, which cannot be exactly represented by a continuous encoder (see Appendix~\ref{app:cyclic_rep}). The same likely applies in the case of two qubits, going from 6 to 7 latent neurons. This restriction also makes it difficult to interpret the details of the learned representation.} 
This is in accordance with the number of parameters required to describe a one- or a two-qubit state. Thus, \PN{} allows us to extract the dimension of the underlying quantum system from tomographically complete measurement data, without any prior information about quantum mechanics.

\PN{} can also be used to determine whether the measurement set $\mathcal{M}_1$ is tomographically complete or not.
To generate tomographically incomplete data, we choose the measurements in $\mathcal{M}_1$ randomly from a subset of all binary projective measurements. 
Specifically, the quantum states corresponding to measurements in $\mathcal{M}_1$ are restricted to random \emph{real} linear superpositions of $k$ orthogonal states, i.e., to a (real) $k$-dimensional subspace.
For a single qubit, we use a two-dimensional subspace; for two quibts, we consider both two- and three-dimensional subspaces.

Given tomographically incomplete data about a state $\psi$, it is not possible for \PN{} to predict the outcome of the final measurement perfectly regardless of the number of latent neurons, in contrast to the tomographically complete case (see Figure~\ref{fig:qubits_tomography_detailed}).
Hence, we can deduce from \PN{}{\it 's} output that $\mathcal{M}_1$ is an incomplete set of measurements.
Furthermore, this analysis provides a qualitative measure for the amount of information provided by the tomographically incomplete measurements: in the two-qubit case, increasing the subspace dimension from two to three leads to higher prediction accuracy and the required number of latent neurons increases.

\subsection{Heliocentric model of the solar system}  \label{sec:copernicus}
All the details about this example were already given in the main text and are summarized in Box~\ref{boxcopernicus}.

\section{Representations of cyclic parameters} \label{app:cyclic_rep}
Here we explain the difficulty of a neural network to learn representations of cyclic parameters, which was alluded to in the context of the qubit example  (Section~\ref{sec:rep_qubit}, see~\cite{pitelis_learning_2013,korman_autoencoding_2018} for a detailed discussion relevant to computer vision).
In general, this problem occurs if the data $\mathcal{O}$ that we would like to represent forms a closed manifold (i.e., a compact manifold without boundary), such as a circle, a sphere or a Klein bottle.
In that case, several coordinate charts are required to describe this manifold. 

As an example, let us consider data points lying on the unit sphere $\mathcal{O}=\{(x,y,z): x^2+y^2+z^2=1\}$, which we would like to encode into a simple representation. The data can be (globally) parameterized with spherical coordinates $\phi \in [0,2 \pi)$ and $\theta \in [0,\pi]$ where  $(x,y,z)=f(\theta,\phi) \coloneqq (\sin \theta \cos \phi, \sin \theta \sin \phi,\cos \theta)$.\footnote{The function $f$ is not a chart, since it is not injective and its domain is not open.} 
We would like the encoder to perform the mapping $f^{-1}$, where we define $f^{-1}((0,0,1))=(0,0)$ and $f^{-1}((0,0,-1))=(\pi,0)$ for convenience. This mapping  is not continuous at points on the sphere with $\phi=0$ for $\theta \in (0,\pi)$.
Therefore, using a neural network as an encoder leads to problems, as neural networks, as introduced here, can only implement continuous functions.
In practice, the network is forced to approximate the discontinuity in the encoder by a very steep continuous function, which leads to a high error for points close to the discontinuity.

In the qubit example, the same problem appears. 
To parameterize a qubit state $\psi$ with two parameters, the Bloch sphere with parameters $\theta \in [0,\pi]$ and $\phi \in [0,2\pi)$ is used: the state $\psi$ can be written as $\psi(\theta,\phi) = (\cos(\theta/2),e^{i\phi} \sin(\theta/2))$ (see for example~\cite{nielsen_quantum_2010} for more details).
Ideally, the encoder would perform the map $E:o(\psi(\theta,\phi)):=\left(|\braket{\alpha_1}{\psi(\theta,\phi)}|^2,\dots,|\braket{\alpha_{N_1}}{\psi(\theta,\phi)}|^2\right) \mapsto (\theta,\phi)$ for some fixed binary projective measurements $\alpha_i \in \mathbb{C}^2$. 
However, such an encoder is not continuous. Indeed, assuming that the encoder is continuous, leads to the following contradiction:
\newline
\begin{align*}
(\theta,0) 
&= E(o(\psi(\theta,\phi = 0))) \\
&= E(o(\lim_{\phi \to 2 \pi} \psi(\theta,\phi))) \\
&= \lim_{\phi \to 2 \pi} E(o(\psi(\theta,\phi))) \\
&= \lim_{\phi \to 2 \pi} (\theta, \phi) = (\theta, 2 \pi) \,,
\end{align*}
where we have used the periodicity of $\phi$ in the second equality and the fact that the Bloch sphere representation and the scalar product (and hence $o(\psi(\theta, \phi))$) as well as the encoder (by assumption) are continuous in $\phi$ in the third equality.

\bibliographystyle{apsrev4-1}
\bibliography{PhysRepBib.bib}

\end{document}